\theoremstyle{plain}
\newtheorem{theorem}{Theorem}[section]
\newtheorem{lemma}[theorem]{Lemma}
\newtheorem{corollary}[theorem]{Corollary}
\theoremstyle{definition}
\newtheorem{definition}[theorem]{Definition}
\theoremstyle{remark}
\begin{document}


\title{Power Characterization of Noisy Quantum Kernels}



\author{Yabo Wang}
\author{Bo Qi}%
\email{qibo@amss.ac.cn}
\author{Xin Wang}
\affiliation{%
	Key Laboratory of Systems and Control, Academy of Mathematics and Systems Science, Chinese Academy of Sciences, Beijing 100190, P. R. China
}%
\affiliation{%
	University of Chinese Academy of Sciences, Beijing 100049, P. R. China
}%

\author{Tongliang Liu}
\affiliation{
	Sydney AI Centre, University of Sydney, NSW 2008, Australia
}%
\author{Daoyi Dong}
\affiliation{%
	School of Engineering, Australian National University, Canberra ACT 2601, Australia
}%

\date{\today}

\begin{abstract}
Quantum kernel methods have been widely recognized as one of promising quantum machine learning algorithms that have potential to achieve quantum advantages. In this paper, we theoretically characterize the power of noisy quantum kernels and demonstrate that under global depolarization noise, for different input data the predictions of the optimal hypothesis inferred by the noisy quantum kernel approximately concentrate towards some fixed value. In particular, we depict the convergence rate in terms of the strength of quantum noise, the size of training samples, the number of qubits, the number of layers affected by quantum noises, as well as the number of measurement shots. Our results show that noises may make quantum kernel methods to only have poor prediction capability, even when the generalization error is small. Thus, we provide a crucial warning to employ noisy quantum kernel methods for quantum computation and the theoretical results can also serve as guidelines when developing practical quantum kernel algorithms for achieving quantum advantages.
\end{abstract}


\maketitle


\section{Introduction}
\subsection{Background}
A main objective of machine learning is to design efficient and robust computation methods to  make accurate predictions for unseen data by using experiences, even for large-scale problems \cite{Mario2023,Huang2023,mohri2018,wright2022high}. Quantum machine learning (QML) aims to explore the representational and computational power of quantum models to offer advantages beyond what is possible using classical models  \cite{Boris2023,Huang2021inf,Dunjko2018,Kubler2021,Huang2021p,liu2021rig,Jerbi2021}.
Among different types of QML modes \cite{wittek2014,Maria2015,Biamonte2017,zhang2020,Li2022,guan2021},  quantum kernel methods have attracted increasing attention and shown great potential for developing powerful new applications \cite{havlivcek2019,Schuld2019,Ruslan2022,canatar2023}.

In machine learning, the prediction error can be decomposed into the sum of the training error and the generalization error, where the so-called generalization depicts the difference between the prediction error on new data and the training error. To make accurate predictions on unseen data, both of the training and generalization errors should be small \cite{Zhang2021UN}. In classical machine learning, it is often much easier to achieve small training errors than to guarantee good generalization. However, in QML the main obstacle
is training and it is often challenging to achieve good trainability. For QML  models based on quantum neural networks (QNNs),  their landscapes often suffer from vanishing gradients known as barren plateaus \cite{McClean2018,Haug2021c,Marrero2021} and/or the existence of exponentially many local minima \cite{You2021,Anschuetz2022b}, which make the training of QNNs extremely difficult. Under noiseless scenarios, quantum kernel methods do not suffer from these trainability issues and thus can naturally achieve smaller training errors as compared to QNNs. This is because for quantum kernel methods, due to the fundamental representation theorem \cite{scholkopf2002,mohri2018}, the optimal parameters minimizing the training error can always be found when the landscape of the cost function is convex \cite{havlivcek2019,Schuld2019,Schuld2022,Jerbi2023}. Quantum kernel methods are widely believed to be representative for achieving practical quantum advantages. The prediction advantages over some classical models by employing quantum kernel methods have been demonstrated in \cite{Kubler2021,Huang2021p,liu2021rig}.

Although quantum kernel methods have shown great potential for achieving quantum advantages, most of the existing results focus on ideal quantum settings without noise. Since noise may severely degrade the performance of quantum kernels, with the current noisy intermediate-scale quantum (NISQ) devices, a natural and crucial question is: \textit{What is the power of noisy quantum kernel methods?} In this paper, we theoretically characterize the power of noisy quantum kernels and prove that for a given number of training samples, once the number of layers affected by noise exceeds some threshold, the prediction capability of noisy kernels is very poor. These limitations are quantitatively demonstrated by an upper bound on the expected distance between the predictions of the worst hypothesis without prediction capability and the optimal hypothesis for noisy quantum kernels. The results provide insights for understanding  the power and limitations of quantum kernels in the NISQ era and guidelines for developing competitive quantum kernel algorithms.


\subsection{Related work}
	Limitations of optimization and variational quantum algorithms on noisy quantum devices were investigated in \cite{stilck2021,Palma2023}. Exponentially tighter bounds on limitations of quantum error mitigation has been given in \cite{quek2022}. Their adopted noise model is either local depolarizing noise or some non-unital noise, applied on each qubit. However, their corresponding $N$-fold layerwise noise  is stronger than the global depolarizing model adopted for power characterization in this paper. Here, $N$ denotes the number of qubits. This is because under the same noise rate, the probability of the quantum state remaining unchanged under our noise is exponentially larger than that in their models. In this paper we focus on investigating the limitations of noisy quantum kernel methods.
	
	It was demonstrated in \cite{Thanasilp2022} that values of quantum kernels over different input data can be exponentially concentrated towards some fixed value under the Pauli noise. 
	Similar to the above local depolarizing noise applied to each qubit, the Pauli noise assumption is also stronger than our global depolarizing model. In addition, their noise-induced concentration bound does not take into account of the size of training samples. Thus, for a given size of training samples, their result cannot tell how many noisy layers will cause poor prediction capability for quantum kernel methods.
	
	The power of noisy quantum kernel methods under global depolarizing noise and sampling error was investigated in \cite{Wang2021}. Their main result is informative only for shallow quantum circuits. In addition, their main result is based on the key assumption of zero training error. Such an assumption places a strong constraint and may limit the applicability of their main results in dealing with noisy kernels. As we will demonstrate in this paper, in the presence of noise, the training error may be large and dominate the  prediction error, making noisy quantum kernel methods fail.

\subsection{Our Contributions}
\begin{figure}[tphb]
	\centering
	\includegraphics[width=\linewidth]{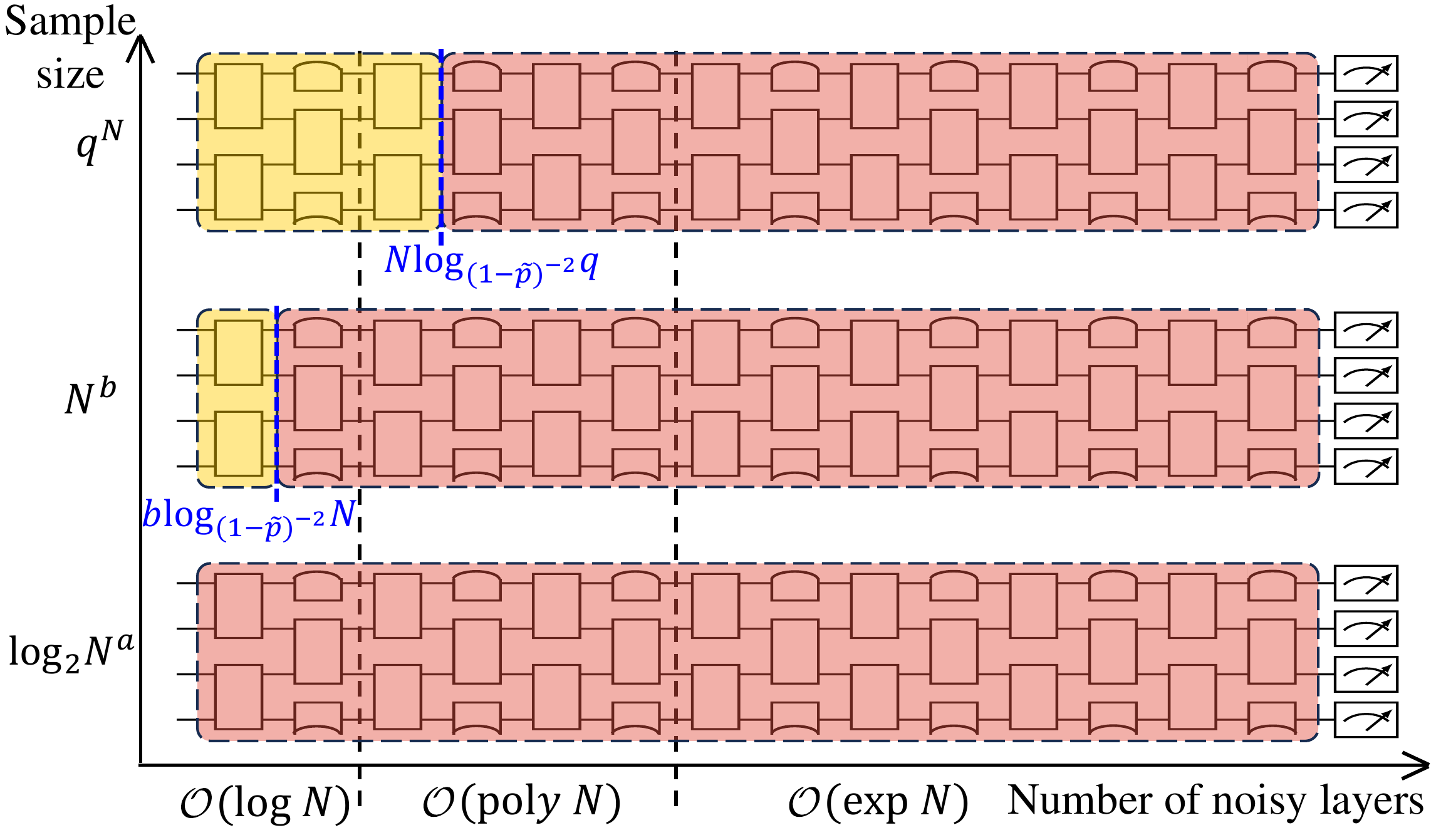}
	\caption{\label{fig:summary} Summary of our main results. The red regions indicate the situations where noisy quantum kernel methods fail in prediction. Here, $N$ and $\tilde{p}$ denote the number of qubits and the strength of layerwise global depolarization noise, respectively. For logarithmically small training samples, noisy quantum kernel methods always fail. For training samples of polynomial size like $N^b$, noisy quantum kernel methods fail as long as the number of layers affected by noise exceeds $b\log_{(1-\tilde{p})^{-2}}N$. For training samples of exponential size like $q^N$  for some $q>1$, noisy quantum kernel methods fail when the number of noisy layers exceeds $N\log_{(1-\tilde{p})^{-2}}q$.}
\end{figure}

In this paper, we propose a new figure of merit to depict the power and limitations of quantum kernel methods, especially the impact of quantum depolarization noise on their prediction capability.

Our main contribution is providing a theoretical characterization of prediction concentration for different input data of the optimal hypothesis inferred by noisy quantum kernels. The concentration speed is clearly depicted in terms of the strength of depolarization noise $\tilde{p}$, the size of training samples, the number of qubits $N$, the number of layers affected by quantum noises, and the number of measurement shots. The results are summarized in Fig.~\ref{fig:summary}, where the red regions represent the situations in which noisy quantum kernel methods fail, namely,  the prediction capability is very poor. Especially, even with exponentially many training samples like $q^N$ ($q>1)$, noisy quantum kernels fail once the number of layers affected by noise exceeds  $N\log_{(1-\tilde{p})^{-2}}q$.

We remark that our results hold for a wide range of quantum embedding schemes as we assume little on the form of  quantum encoding circuits. Moreover, our upper bounds can be applied to quantum circuits with a large number of qubits and deep depth, not only limited to the current available shallow circuits. Thus, our results not only serve as a warning for shallow NISQ circuits, but also provide guidelines for future quantum computation. In addition, our results complement the research on generalization of QML and indicate that a QML method having good generalization alone does not necessarily guarantee good prediction since the training error may be large, especially in the noisy cases. To achieve good prediction, both the training and generalization errors should be small.

This paper is organized as follows. In Section~\ref{preliminary}, we first introduce several preliminaries and then formulate the  learning task with noisy quantum kernels. The main results are presented in Section~\ref{main}. Numerical verifications are shown in Section~\ref{experiments}.  Section~\ref{conclusion} concludes the paper.

\begin{figure*}[tphb]
	\centering
	\includegraphics[width=0.9\linewidth]{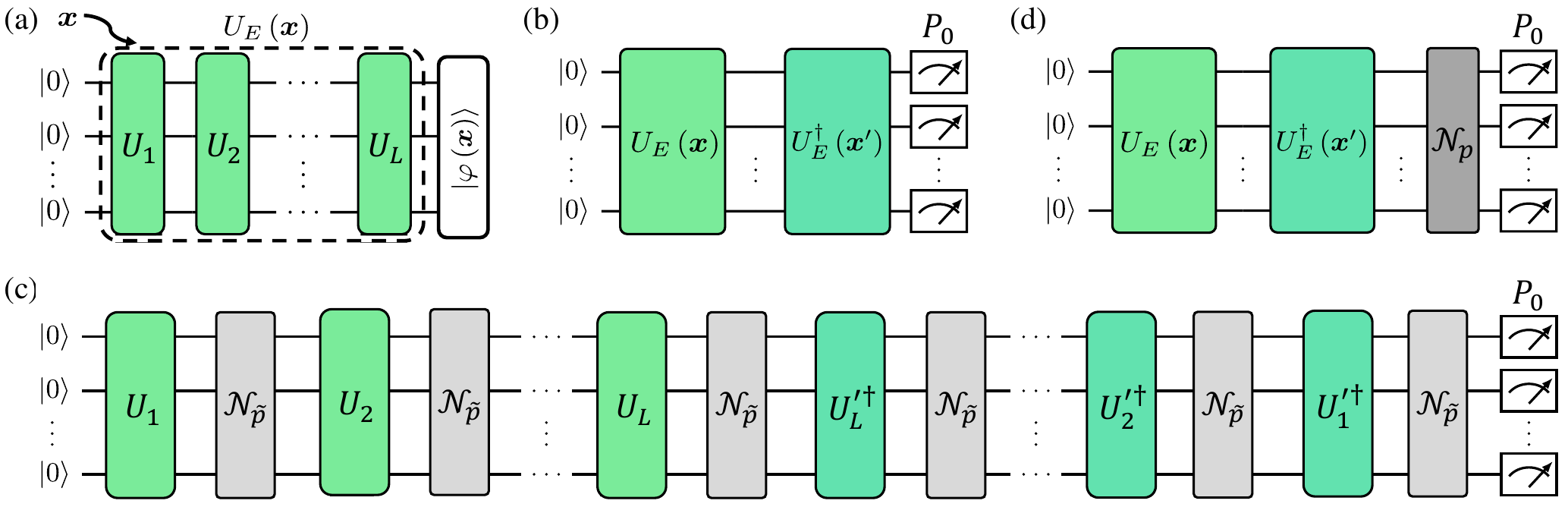}
	\caption{\label{fig:circuit} Quantum circuits employed in quantum kernel methods. (a) A general $L$-layer encoding circuit encodes the classical data $\boldsymbol{x}$ into the quantum state $|\varphi\left(\boldsymbol{x} \right)\rangle$ through the quantum feature mapping $U_E\left( \cdot\right)$. (b) Quantum circuit utilized to compute quantum kernels in the ideal/noiseless case. The measurement operator is  $P_{0}=\left( |0\rangle\langle0|\right) ^{\otimes N}$. (c) After each layer of the encoding circuit, a global depolarizing channel with rate $\tilde{p}$ is applied. (d) The equivalent circuit of the noisy quantum circuit in (c).~The total effect of all the quantum noise channels $\mathcal{N}_{\tilde{p}}$ can be effectively described by a global depolarizing channel $\mathcal{N}_p$ with $p=1-\left( 1-\tilde{p}\right) ^{2L}$.}
\end{figure*}

\section{\label{preliminary}Preliminaries and Framework}
\subsection{Kernel methods}

Kernel methods are widely used in machine learning \cite{Thomas2008,Cho2009,Theodoros2005,taylor2004}. They are based on  kernels or kernel functions, which implicitly define an inner product in a high-dimensional Hilbert space. 

Assume that  both training and test data are independent and identically distributed~(i.i.d.)~according to some fixed but unknown distribution $\mathcal{D}$ defined over $\mathcal{X}\times\mathcal{Y}$. Denote the training sample  by
$S=\left\lbrace \left( \boldsymbol{x}_{i},y_{i} \right) \right\rbrace_{i=1}^{n}\subset \mathcal{X}\times\mathcal{Y}$. The  kernel function $\mathcal{K}\left(\cdot,\cdot \right)$ is defined such that for $\boldsymbol{x},\boldsymbol{x}^{\prime}\in\mathcal{X}$,
\begin{equation}\label{key}
\mathcal{K}\left(\boldsymbol{x},\boldsymbol{x}^{\prime} \right)=\langle\Phi\left(\boldsymbol{x} \right),\Phi\left(\boldsymbol{x}^{\prime} \right)\rangle,
\end{equation}
where  $\Phi\left( \cdot\right)  $ denotes a feature mapping that maps $\boldsymbol{x}\in \mathcal{X}$ to a high-dimensional  Hilbert space called feature space  with the inner product $\langle \cdot, \cdot \rangle$. A crucial benefit of kernel methods is that there is no need to explicitly define or compute the feature mapping $\Phi$. Instead, the performance of kernel-based learning depends on the kernel function $\mathcal{K}\left(\cdot,\cdot \right)$.

In kernel methods, the hypothesis function is typically chosen as
\begin{equation}\label{hh}
h\left( \boldsymbol{x};\boldsymbol{\omega}\right) =\langle\boldsymbol{\omega},\Phi\left(\boldsymbol{x} \right)\rangle,
\end{equation}
where $\boldsymbol{\omega}$ is a vector in the feature space. Since the ultimate goal is to make accurate predictions for unseen data, the prediction error of a hypothesis $h\left( \boldsymbol{x};\boldsymbol{\omega}\right)$ with parameter $\boldsymbol{\omega}$ is taken to be the expected loss
\begin{equation}\label{prederror}
R\left( \boldsymbol{\omega}\right) =\mathop{\mathbb{E}}\limits_{\left( \boldsymbol{x},y\right) \sim\mathcal{D}}({\left[h\left( \boldsymbol{x};\boldsymbol{\omega}\right) -y\right]^2}).
\end{equation}
As both the labels of  unseen data and the distribution  $\mathcal{D}$ are unknown, the prediction error is unavailable. The training error on the labeled sample $S$ is often taken as a proxy defined as
\begin{equation}\label{key}
\widehat{R}_S\left( \boldsymbol{\omega}\right) =\frac{1}{n}\sum_{i=1}^{n}{\left [h\left(\boldsymbol{x}_{i};\boldsymbol{\omega}\right) -y_{i}\right]^2}.
\end{equation}
Notice that the prediction error can be decomposed as
\begin{equation}\label{ptg}
R\left( \boldsymbol{\omega}\right) =\widehat{R}_S\left( \boldsymbol{\omega}\right) +\texttt{gen}\left( \boldsymbol{\omega}\right) ,
\end{equation}
where  $\texttt{gen}\left( \omega\right) $  is referred to as the generalization error. It is clear that to make accurate prediction, both of the training and generalization errors should be small.

To this end, we consider the following regularized optimization problem:
\begin{equation}\label{keylamda}
\min\limits_{\boldsymbol{\omega}}{\sum_{i=1}^{n}{\left[  h\left( \boldsymbol{x}_{i};\boldsymbol{\omega}\right) -y_{i}\right]  ^{2}}+\lambda \langle\boldsymbol{\omega},\boldsymbol{\omega}\rangle},
\end{equation}
where $\lambda\textgreater0$ is a hyperparameter. This convex minimization problem can be solved analytically, and the optimal parameter reads
\begin{equation}\label{optimalw}
\boldsymbol{\omega}^{\star}=\sum_{i,j=1}^{n}{{\Phi\left(\boldsymbol{x}_{i} \right)\left[ \left(K+\lambda I \right)^{-1}\right]_{ij}  y_{j}}},
\end{equation} 
where the matrix $K\in \mathbb{R}^{n\times n}$, whose element $K_{ij}=\mathcal{K}\left(\boldsymbol{x}_{i},\boldsymbol{x}_{j} \right)=\langle\Phi\left(\boldsymbol{x}_{i} \right),\Phi\left(\boldsymbol{x}_{j} \right)\rangle$.

\subsection{Quantum kernel methods}
In quantum computation, the carrier of information is qubits. For an $N$-qubit system,  the quantum state can be mathematically represented as a positive semi-definite Hermitian matrix $\rho\in \mathbb{C}^{2^N\times 2^N}$ with  $\mathrm{Tr}\left(\rho\right) =1$. Note that $\rho$ is called a pure state if $\mathrm{rank}\left(\rho\right) =1$,  which can be represented in terms of a unit state vector $|\varphi\rangle$ as $\rho=|\varphi\rangle \langle \varphi|$, where $\langle \varphi |=|\varphi \rangle^{\dag}$; otherwise, it is called a mixed state and can be decomposed as a convex combination of pure states.

For quantum computing, classical data $\boldsymbol{x}\in\mathcal{X}$ may be first embedded  through an encoding  quantum circuit \cite{havlivcek2019,Schuld2019,Schuld2021eff,Goto2021,Hubregtsen2022,lloyd2020q} denoted by $U_E\left( \cdot\right) $ as illustrated in Fig.~\ref{fig:circuit}(a). The encoded quantum state vector is
\begin{equation}\label{Ue}
|\varphi\left(\boldsymbol{x} \right)\rangle=U_{E}\left(\boldsymbol{x} \right)|0\rangle^{\otimes N}
\end{equation}
with the quantum state $\rho\left(\boldsymbol{x} \right)=|\varphi\left(\boldsymbol{x} \right)\rangle\langle\varphi\left(\boldsymbol{x} \right)|$ corresponding to a vector in the feature space with the inner product $\langle A, B \rangle={\rm Tr}\left[AB \right] $. The label $y$ of $\boldsymbol{x}$ can be generated through  a quantum concept:
\begin{equation}\label{key}
y=c\left( \boldsymbol{x}\right) ={\rm Tr}\left[OU_{\rm QNN} \rho\left(\boldsymbol{x} \right)U_{\rm QNN}^{\dagger} \right],
\end{equation}
where $O$ and $U_{\rm QNN}$ represent the measurement operator and a specified quantum neural network, respectively. Without loss of generality, we assume that $\Vert O\Vert_{2}\leq 1$. Here, $\Vert \cdot \Vert_{2}$ denotes the spectral norm, which is equal to the maximal singular value of the corresponding matrix.


In quantum kernel methods, we only need to employ a quantum circuit  as illustrated in Fig.~\ref{fig:circuit}(b) to compute the quantum kernel functions as
\begin{align}
&~\mathcal{K}\left(\boldsymbol{x},\boldsymbol{x}^{\prime} \right)={\rm Tr}\left[\rho\left(\boldsymbol{x} \right)\rho\left(\boldsymbol{x}^{\prime} \right) \right]=\left|\langle\varphi\left(\boldsymbol{x} \right)|\varphi\left(\boldsymbol{x}^{\prime} \right)\rangle\right|^{2}\nonumber
\\
=&{\rm Tr}\left[P_{0}U_{E}^{\dagger}\left(\boldsymbol{x}^{\prime} \right) U_{E}\left(\boldsymbol{x} \right)\left( |0\rangle\langle0|\right) ^{\otimes N}U_{E}^{\dagger}\left(\boldsymbol{x} \right) U_{E}\left(\boldsymbol{x}^{\prime} \right) \right]\label{idealk}
\end{align}
with the projector $P_{0}=\left( |0\rangle\langle0|\right) ^{\otimes N}$. To demonstrate quantum advantages,  the key is to construct a quantum encoding circuit $U_E\left( \cdot\right) $ such that patterns which are classically intractable can be recognized in the feature space \cite{liu2021rig}.

	Once the kernel matrix $K$ is obtained through Eq.~(\ref{idealk}), the remaining optimization is classical. For a given training sample $S=\left\lbrace \left( \boldsymbol{x}_{i},y_{i} \right) \right\rbrace_{i=1}^{n}$, from Eqs.~(\ref{hh}) and (\ref{optimalw}), the optimal hypothesis reads
	\begin{align}
	&~h\left( \boldsymbol{x}\right)\triangleq h\left( \boldsymbol{x};\boldsymbol{\omega}^{\star}\right) =\min{\Big{\{}1, \max{\big{\{}-1, {\rm Tr}\left[\rho\left(\boldsymbol{x} \right)\boldsymbol{\omega}^{\star} \right]\big{ \}}}\Big{ \}}}\nonumber
	\\
	=&\min{\Bigg{\{}1, \max {\bigg{\{} -1, \sum_{i,j=1}^{n}{{\mathcal{K}\left(\boldsymbol{x},\boldsymbol{x}_{i} \right) \left[ \left( K+\lambda I \right) ^{-1}\right]_{ij}  y_{j}}} \bigg{\}}} \Bigg{\}}}\label{optimalh}.
	\end{align}
	Here, the kernel function $\mathcal{K}\left(\boldsymbol{x},\boldsymbol{x}_{i} \right)$ is also obtained via Eq.~(\ref{idealk}).

\subsection{Noisy quantum kernels}
Up to now, we only consider the ideal setting, that is, the  quantum  circuits used to compute quantum kernels are unitary. However, in practice, particularly in the NISQ era, quantum circuits are susceptible to various quantum noises.  In this paper, we focus on the depolarization noise and consider its destructive impact on the prediction capability  of quantum kernel methods. Our techniques may be generalized to other types of noise.

	As illustrated in Fig.~\ref{fig:circuit}(c), when computing quantum kernels, as the noise model adopted in \cite{Wang2021},  we assume that a global depolarizing channel with rate $\tilde{p}$ is applied after each  layer of  the ideal quantum circuit (illustrated in Fig.~\ref{fig:circuit}(b)), which reads
	\begin{equation}\label{key0}
	\mathcal{N}_{\tilde{p}}\left(\rho  \right)= \left(1-\tilde{p} \right)\rho + \tilde{p} \frac{1}{D}I.
	\end{equation}
	At first glance, our global depolarizing model is stronger than the so-called local noise models considered in  \cite{stilck2021,Palma2023,Thanasilp2022,quek2022}, which are in the form of $\mathcal{N}^{\prime}_{\tilde{p}}(\rho)=\otimes_{j=1}^N\mathcal{N}^{\prime}_j(\rho)$ with $\mathcal{N}^{\prime}_j$ denoting either the single-qubit depolarizing noise, single-qubit Pauli noise, or single-qubit non-unital noise. In fact, our global model is weaker than these so-called local noise models for the problem in this paper. This is because at the same depolarizing rate $\tilde{p}$, the probability that the quantum state remains unchanged under our global noise is $(1-\tilde{p})$, which is exponentially larger than that under the so-called local noise which is  $(1-\tilde{p})^N$. In addition, when presenting negative results concerning noisy quantum kernels, it is better to assume a relatively weaker noise model. Once the kernel methods fail under weaker noises, they fail under stronger ones in general. Note that the noise rate $\tilde{p}>0$ in Eq.~(\ref{key0}) can be arbitrarily small. Thus, it can depict the case where the noise influence is very weak, namely, after the noise the quantum state is left untouched with a very high probability $1-\tilde{p}$.

It can be verified that the total effect of all the $2L$ quantum noise channels $\mathcal{N}_{\tilde{p}}$ can be effectively described by a global depolarizing channel
\begin{equation}\label{nomodel}
\mathcal{N}_{p}\left[\rho \left( \boldsymbol{x}; \boldsymbol{x}^\prime\right)  \right] = \left(1-p \right) \rho\left( \boldsymbol{x};\boldsymbol{x}^\prime\right) + p \frac{1}{D}I
\end{equation}
applied after the whole ideal unitary channels.
Here, $p=1-\left( 1-\tilde{p}\right) ^{2L}$ with $L$ denoting the depth of the quantum encoding circuit $U_E\left( \cdot\right) $,  $D=2^N$, and
the ideal quantum output state
$$\rho \left( \boldsymbol{x}; \boldsymbol{x}^\prime\right)=|\varphi \left( \boldsymbol{x}; \boldsymbol{x}^\prime\right)\rangle\langle\varphi \left( \boldsymbol{x}; \boldsymbol{x}^\prime\right)|,$$ where
$|\varphi \left( \boldsymbol{x}; \boldsymbol{x}^\prime\right)\rangle=U^{\dag}_E\left( \boldsymbol{x}^\prime\right)  U_E\left( \boldsymbol{x}\right) |0\rangle^{\otimes N}$. The equivalent circuit is illustrated in Fig.~\ref{fig:circuit}(d) and the proof of the equivalence is given in Lemma~\ref{depolarnoise} in Appendix \ref{lemmas}. In fact, we can see that the exponent $L$ in the depolarization rate $p$ actually indicates the total number of layers affected by the depolarization noise $\mathcal{N}_{\tilde{p}}$ when implementing $U_E(\cdot)$.

Under the quantum depolarization noise, the noisy quantum kernel $\widetilde{\mathcal{K}}\left(\boldsymbol{x},\boldsymbol{x}^{\prime} \right)$ reads
\begin{align}
\widetilde{\mathcal{K}}\left(\boldsymbol{x},\boldsymbol{x}^{\prime} \right)
&={\rm Tr}\big{\{}  P_{0}\,\mathcal{N}_{p}\left[\rho \left( \boldsymbol{x}; \boldsymbol{x}^\prime\right)  \right]\big{\}}\nonumber
\\
&=\left(1-p \right) \mathcal{K}\left(\boldsymbol{x},\boldsymbol{x}^{\prime} \right)+p \,\frac{1}{D}\label{noisyk}.
\end{align}
The corresponding noisy kernel matrix is $\widetilde{K}=\left(1-p \right)K+p\,\bar{K}$,
with $\bar{K}=\frac{1}{D}J$, where $J$ denotes the matrix that has all entries $1$. Accordingly, the optimal hypothesis in the presence of noise reads
\begin{widetext}
	\begin{equation}
	\tilde{h}\left( \boldsymbol{x}\right)\triangleq\tilde{h}\left( \boldsymbol{x};\widetilde{\boldsymbol{\omega}}^{\star}\right)
	=\min{\Bigg{\{}1, \max {\bigg{\{} -1, \sum_{i,j=1}^{n}{{\widetilde{\mathcal{K}}\left( \boldsymbol{x},\boldsymbol{x}_{i} \right) \left[ \left( \widetilde{K}+\lambda I \right) ^{-1}\right]_{ij}  y_{j}}} \bigg{\}}} \Bigg{\}}}\label{optimalh1}.
	\end{equation}
\end{widetext}

Note that in the worst scenario where  $p=1$, we have the noisy kernel  denoted by $\bar{\mathcal{K}}$ with the property that for all $\boldsymbol{x},\boldsymbol{x}^{\prime}\in\mathcal{X}$,
\begin{equation}\label{worstk}
\bar{\mathcal{K}}\left(\boldsymbol{x},\boldsymbol{x}^{\prime} \right)=\frac{1}{D}.
\end{equation}
From Eq.~(\ref{optimalh1}), for new  data $\boldsymbol{x}$, the corresponding optimal hypothesis returns the same value as
\begin{equation}\label{worhh}
\bar{h}\left( \boldsymbol{x}\right)\triangleq\bar{h}\left( \boldsymbol{x};\bar{\boldsymbol{\omega}}^{\star}\right)=\frac{1}{D\lambda+n}\sum_{i=1}^{n}{y_{i}},
\end{equation}
which depends only on the training sample, not on the new data. Thus, it is completely uninformative for new data, and  does not have any prediction capability  at all.

In addition, when computing quantum kernels via quantum circuits, only a finite number of measurements are implemented in practice. Assume that
$m$ measurements are implemented to compute each $\widetilde{\mathcal{K}}\left(\boldsymbol{x},\boldsymbol{x}^{\prime} \right)$. Then the estimated  noisy quantum kernels can be described as
\begin{equation}\label{estimatedk}
\widehat{\mathcal{K}}\left(\boldsymbol{x},\boldsymbol{x}^{\prime} \right)=\frac{1}{m}\sum_{k=1}^{m}{V_{k}\left(\boldsymbol{x},\boldsymbol{x}^{\prime} \right)},
\end{equation}
where each $V_{k}\left(\boldsymbol{x},\boldsymbol{x}^{\prime} \right)$ is a Bernoulli random variable with the expectation being $\widetilde{\mathcal{K}}\left(\boldsymbol{x},\boldsymbol{x}^{\prime} \right)$.

It can be verified that the random matrix $\widehat{K}+\lambda I$ is positive definite  with probability of at least $1-ne^{{-{\lambda}^{2}m}/{4n}}$ (see  Lemma~\ref{positive}). With the positive definiteness of $\widehat{K}+\lambda I$,  the optimal  hypothesis  under the estimated noisy kernels reads
\begin{align}
&~\hat{h}\left( \boldsymbol{x}\right)\triangleq\hat{h}\left( \boldsymbol{x};\widehat{\boldsymbol{\omega}}^{\star}\right)\nonumber
\\
=&\min{\Bigg{\{}1, \max {\bigg{\{}-1,\sum_{i,j=1}^{n}{{\widehat{\mathcal{K}}\left( \boldsymbol{x},\boldsymbol{x}_{i} \right) \left[ \left( \widehat{K}+\lambda I \right) ^{-1}\right]_{ij}  y_{j}}} \bigg{\}}} \Bigg{\}}}\label{optimalh2}.
\end{align}

\section{\label{main}Main Results}
In this section, we explicitly characterize the prediction capability  of quantum kernel methods   under quantum depolarization noise and measurement noise owing to finite shots. To this end, we  consider a new figure of merit $$\mathop{\mathbb{E}}\limits_{\left( \boldsymbol{x},y\right) \sim\mathcal{D}}{\left|\tilde{h}\left( \boldsymbol{x}\right)  -\bar{h}\left( \boldsymbol{x}\right)\right|}.$$
It describes the expected difference of the predictions between the optimal hypothesis under the depolarization  noise $\tilde{h}\left( \boldsymbol{x}\right) $ and the worst hypothesis $\bar{h}\left( \boldsymbol{x}\right)$, which essentially has no prediction capability at all.

In most of existing results, the performance of QML is usually evaluated  by the upper bound of either the generalization error or the training error. The implicit assumption is that the learning algorithm can achieve small training error or generalization error, respectively, which does not always hold yet especially for NISQ settings. Now we present a result about the negative impact of noise on the power of quantum kernel methods.

\begin{theorem}\label{noisyth}
	For any $0\textless\delta\textless 1$, with probability of at least $1-\delta$ over the draw of an i.i.d. sample $S=\left\lbrace \left( \boldsymbol{x}_{i},y_{i} \right) \right\rbrace_{i=1}^{n}$  of size $n$,
	we have
	\begin{align}
	&~\mathop{\mathbb{E}}\limits_{\left( \boldsymbol{x},y\right) \sim\mathcal{D}}{\left|\tilde{h}\left( \boldsymbol{x}\right) -\bar{h}\left( \boldsymbol{x}\right)\right|}\nonumber
	\\
	&\le f\left( \frac{n}{\lambda} \left( 1-p\right)\left( 1+\frac{1}{D}\right)\right)+\frac{8\sqrt{Dn}}{D\lambda+n}+6\sqrt{\frac{ \log{\frac{4}{\delta}}}{2n}},\label{noisyb}
	\end{align}
	where $f\left( z\right) =\frac{z+8\sqrt{\frac{z}{\lambda}}}{1-z}$ with $\lambda$ being the hyperparameter in Eq.~(\ref{keylamda}),  $p=1-\left( 1-\tilde{p}\right) ^{2L}$ is the depolarization rate with $L$ denoting the depth of $U_E\left( \cdot\right) $ in Eq.~(\ref{Ue}), and $D=2^{N}$ is the dimension of the $N$-qubit state space.
\end{theorem}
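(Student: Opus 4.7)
The plan is to bound the expected difference through an operator-norm perturbation argument around the fully-depolarized kernel, followed by a standard concentration step. First, since labels lie in $[-1,1]$, we have $|\bar h(x)|\le n/(D\lambda+n)\le 1$, so the clipping in $\bar h$ is vacuous; by the $1$-Lipschitz property of $\min\{1,\max\{-1,\cdot\}\}$ I may replace $|\tilde h(x)-\bar h(x)|$ by the unclipped difference $|\tilde g(x)-\bar g(x)|$, where $\tilde g(x)=\tilde k(x)^{T}\widetilde M\,y$ and $\bar g(x)=\bar k(x)^{T}\bar M\,y$, with $\widetilde M=(\widetilde K+\lambda I)^{-1}$, $\bar M=(\bar K+\lambda I)^{-1}$, and $\tilde k(x),\bar k(x)\in\mathbb R^{n}$ the corresponding kernel column vectors. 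Since $\bar K=\tfrac1D\mathbf 1\mathbf 1^T$ is rank one, Sherman--Morrison gives $\bar M=\tfrac1\lambda I-\tfrac{1}{\lambda(D\lambda+n)}\mathbf 1\mathbf 1^T$, so $\bar k(x)^{T}\bar M=\tfrac{1}{D\lambda+n}\mathbf 1^{T}$, which simultaneously recovers $\bar h(x)=\tfrac{1}{D\lambda+n}\sum_{i}y_{i}$ and supplies a clean baseline to perturb around.

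Using $\tilde k(x)-\bar k(x)=(1-p)(k(x)-\bar k(x))$ and $\widetilde K+\lambda I=(\bar K+\lambda I)+(1-p)(K-\bar K)$, I would decompose
$$
\tilde g(x)-\bar g(x)=(1-p)\,\delta k(x)^{T}\widetilde M\, y + \bar k(x)^{T}(\widetilde M-\bar M)\,y
$$
with $\delta k(x)=k(x)-\bar k(x)$, and expand $\widetilde M$ as a Neumann series in $(1-p)(K-\bar K)\bar M$. The series converges because $\|\bar M\|\le 1/\lambda$ and $\|K-\bar K\|_{2}\le n(1+1/D)$ give $\|(1-p)(K-\bar K)\bar M\|\le z<1$, and the geometric sum yields the $\tfrac{1}{1-z}$ factor in $f(z)$. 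A coarse Cauchy--Schwarz on the first piece, with $\|\delta k(x)\|_{2}\le\sqrt n\,(1+1/D)$ and $\|y\|_{2}\le\sqrt n$, gives the $\tfrac{z}{1-z}$ contribution; a refined estimate using the factorization $\delta k(x)^{T}\widetilde M y=(\widetilde M^{1/2}\delta k(x))^{T}(\widetilde M^{1/2}y)$ together with the sharper bound $\|\widetilde M^{1/2}y\|_{2}^{2}=y^T\widetilde M y\le n/\lambda$ (obtained by comparing the optimum of the regularized risk to its value at $\omega=0$) is what produces the $\tfrac{8\sqrt{z/\lambda}}{1-z}$ contribution. For the second piece, the identity $\widetilde M-\bar M=-(1-p)\bar M(K-\bar K)\widetilde M$ combined with $\bar k(x)^{T}\bar M=\tfrac{1}{D\lambda+n}\mathbf 1^{T}$ gives
$$
\bar k(x)^{T}(\widetilde M-\bar M)\,y=-\tfrac{1-p}{D\lambda+n}\mathbf 1^{T}(K-\bar K)\widetilde M\,y,
$$
whose $(D\lambda+n)^{-1}$ prefactor combined with the norm bounds on $\mathbf 1^{T}(K-\bar K)$ and $\widetilde M\,y$ is the source of the noise-independent additive $\tfrac{8\sqrt{Dn}}{D\lambda+n}$ term.

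Finally, to pass from the sample-dependent estimate to the $(x,y)\sim\mathcal D$ expectation with high probability, I would apply McDiarmid's inequality to the functional $S\mapsto\mathbb E_{(x,y)}|\tilde h(x)-\bar h(x)|$. Stability of kernel ridge regression and the Lipschitz property of clipping imply that a single-sample replacement in $S$ changes this functional by $O(1/n)$ uniformly, so McDiarmid yields a deviation of order $\sqrt{\log(1/\delta)/n}$; a union bound at level $\delta/2$ applied to both the perturbation side and the baseline (which produces the $\log(4/\delta)$) yields the stated concentration term $6\sqrt{\log(4/\delta)/(2n)}$. The hard part will be the bookkeeping of constants through the Neumann expansion and the refined Cauchy--Schwarz step: in particular, tracking how the $\sqrt{1/\lambda}$ factor from $\|\widetilde M^{1/2}y\|_{2}$ combines with the $(1-p)$ prefactor and with the rank-one structure of $\bar K$ is what pins down the constants $8$ appearing both inside $f(z)$ and in the $\sqrt{Dn}/(D\lambda+n)$ term.
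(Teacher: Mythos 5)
Your algebraic skeleton for the ``empirical'' part is close in spirit to the paper's: the paper also perturbs the resolvent around the fully depolarized kernel, and its Lemma~6 of Ref.~\cite{Wang2021} (a matrix-inverse perturbation bound) plays exactly the role of your Neumann series, producing the $\frac{1}{1-z}$ factor via $\Vert M_{\widetilde K,\bar K}\Vert_2\le \frac{(n/\lambda^2)(1-p)(1+1/D)}{1-(n/\lambda)(1-p)(1+1/D)}$. However, there are two genuine gaps. First, your attribution of the terms is inconsistent with what they must be. You claim the additive $\frac{8\sqrt{Dn}}{D\lambda+n}$ term is ``noise-independent'' and comes from the cross term $\bar k(x)^{\rm T}(\widetilde M-\bar M)y=-\frac{1-p}{D\lambda+n}\mathbf 1^{\rm T}(K-\bar K)\widetilde M y$; but that expression carries an explicit factor $(1-p)$ and vanishes at $p=1$, whereas the theorem's term survives in the worst case. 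In the paper this term has a completely different origin: it is $\frac{8}{\sqrt n}\sqrt{Y^{\rm T}(\bar K+\lambda I)^{-1}\bar K(\bar K+\lambda I)^{-1}Y}\le\frac{8\sqrt{Dn}}{D\lambda+n}$, i.e.\ the norm of the optimal weight vector for the worst-case kernel, entering through a Rademacher-complexity bound. Likewise the $\frac{8\sqrt{z/\lambda}}{1-z}$ piece of $f$ comes from $\frac{8}{\sqrt n}\lceil\Vert\boldsymbol\omega^\star\Vert\rceil$ via $Y^{\rm T}\widetilde M\widetilde K\widetilde M Y-Y^{\rm T}\bar M\bar K\bar M Y\le n(1+\lambda\Vert M\Vert_2)\Vert M\Vert_2$, not from a refined Cauchy--Schwarz on $\delta k(x)^{\rm T}\widetilde M y$ (your version of that estimate gives $(1-p)n/\lambda\approx z$, not $\sqrt{z/\lambda}$).

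Second, and more seriously, your concentration step does not close. Applying McDiarmid to the functional $S\mapsto\mathbb E_{(\boldsymbol x,y)}|\tilde h(\boldsymbol x)-\bar h(\boldsymbol x)|$ only concentrates that functional around its \emph{own} expectation over $S$, which you have no handle on; what is actually needed is a uniform-convergence statement relating the population average of $|\tilde h-\bar h|$ to its empirical average over the training points (the latter being the quantity controlled deterministically by $\lambda\Vert M_{\widetilde K,\bar K}\Vert_2$). The paper obtains this via Lemma~\ref{th33} applied to the balls $\{\boldsymbol\omega:\Vert\boldsymbol\omega\Vert\le\gamma\}$, Talagrand's contraction through the clipping, a bound $\hat{\mathcal R}_S\le\lceil\Vert\boldsymbol\omega\Vert\rceil\sqrt{{\rm Tr}(K_h)}/n$, and a union bound over integer radii $\gamma$ to handle the data-dependent radius $\Vert\boldsymbol\omega^\star\Vert$ --- this is precisely where the constant $6$, the $\log\frac{4}{\delta}$, and the $\frac{8}{\sqrt n}\lceil\Vert\boldsymbol\omega^\star\Vert\rceil$ structure come from. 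A stability-based generalization bound could in principle replace this, but you would have to prove uniform stability of the clipped noisy-kernel predictor and rederive all constants; as written, the step is asserted rather than established. A final remark: the purely pointwise part of your decomposition, if carried through with $\Vert\delta k(\boldsymbol x)\Vert_2\le\sqrt n$ and $\Vert\widetilde M\Vert_2\le1/\lambda$, actually yields a deterministic bound of order $z$ valid for every $\boldsymbol x$, with no probabilistic term at all --- an interesting alternative, but a different statement from the theorem you set out to prove.
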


	Theorem \ref{noisyth} holds for quantum circuits with arbitrary depth and width. Moreover, since we do not place strong constraints on the form of quantum encoding circuits $U_E$, our result holds for a wide class of encoding strategies. From Theorem~\ref{noisyth}, if the upper bound in Eq.~(\ref{noisyb}) is small, then noisy quantum kernel methods fail in prediction for new data. Note that for the upper bound in Eq.~(\ref{noisyb}), the first term $f(\cdot)$ converges to 0 if and only if its argument converges to 0, and the second term  approaches to 0 as the increase of  $D$ and $n$ (as long as $n$ is not in the order of $D=2^N$).
	
	To better characterize the limitations of noisy quantum kernel methods, we quantify the circuit depth $L$ and the size of training samples $n$ in terms of the number of qubits $N$. As illustrated by the vertical axis in Fig.~\ref{fig:summary}, we consider three typical orders of the training size $n$.  The red regions in Fig.~\ref{fig:summary} describe the ranges of the circuit depth $L$ such that the upper bound approaches to 0 as $N$ increases, making noisy quantum kernel methods fail.  For example, noisy quantum kernel methods always fail for logarithmically small training samples, and even for training samples of exponential (polynomial) size $q^N$ with $q>1$ ($N^b$ with $b>1$),  noisy quantum kernel methods fail as long as the number of noisy layers exceeds $N\log_{(1-\tilde{p})^{-2}}q$ ($b\log_{(1-\tilde{p})^{-2}}N$). In the yellow regions, our upper bound in Eq.~(\ref{noisyb}) is uninformative, and needs to be further investigated. The demarcation lines of red and yellow regions are clearly depicted in Fig.~\ref{fig:summary}.

	To address practical and meaningful tasks, the scale of quantum circuits should be relatively large to generate a sufficient amount of expressibility. Thus, it is reasonable to utilize quantum circuits of polynomial depth. In this case, however, as illustrated in Fig.~\ref{fig:summary}, even with exponentially large training samples, quantum kernel methods under noise may not predict well for unseen data.  Therefore, our result provides a caveat for employing quantum kernel methods to demonstrate quantum advantages in the NISQ era.


The proof of Theorem~\ref{noisyth} is mainly based on the following lemma.
\begin{lemma}\label{noisylemma}
	Under the same setting as in Theorem~\ref{noisyth}, we have
	\begin{align}
	&~\mathop{\mathbb{E}}\limits_{\left( \boldsymbol{x},y\right) \sim\mathcal{D}}{\left|\tilde{h}\left( \boldsymbol{x}\right) -\bar{h}\left( \boldsymbol{x}\right)\right|}\nonumber
	\\
	&\le \lambda\Vert M_{\widetilde{K}, \bar{K}}\Vert_{2}+8\sqrt{\left( 1 + \lambda \Vert M_{\widetilde{K}, \bar{K}} \Vert_{2}\right) \Vert M_{\widetilde{K}, \bar{K}} \Vert_{2}}\nonumber
	\\
	&\quad+\frac{8\sqrt{Dn}}{D\lambda+n} +6\sqrt{\frac{\log{\frac{4}{\delta}} }{2n}},\label{zongeq}
	\end{align}
	where $\Vert M_{\widetilde{K}, \bar{K}}  \Vert_{2}= \Big{\|} {\left(\widetilde{K}+\lambda I \right) }^{-1} - {\left(\bar{K}+\lambda I \right) }^{-1} \Big{\|}_{2}$.
\end{lemma}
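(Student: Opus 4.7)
My plan is to reduce $\mathop{\mathbb{E}}\limits_{(\boldsymbol{x},y)\sim\mathcal{D}}|\tilde{h}(\boldsymbol{x})-\bar{h}(\boldsymbol{x})|$ to a training-sample average that is computable in closed form from the KRR normal equations, and then lift it to the population level via uniform convergence for RKHS-ball hypothesis classes.

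First, I drop the outer clipping: since $\bar{h}(\boldsymbol{x})=\sum_i y_i/(D\lambda+n)$ is a constant with $|\bar{h}|\le n/(D\lambda+n)\le 1$ (so the clip is inactive on $\bar{h}$), the $1$-Lipschitzness of $u\mapsto\min\{1,\max\{-1,u\}\}$ yields $|\tilde{h}(\boldsymbol{x})-\bar{h}(\boldsymbol{x})|\le |\tilde{h}^{u}(\boldsymbol{x})-\bar{h}(\boldsymbol{x})|$ with the unclipped predictor $\tilde{h}^{u}(\boldsymbol{x})=\widetilde{\boldsymbol{k}}(\boldsymbol{x})^{T}(\widetilde{K}+\lambda I)^{-1}\boldsymbol{y}$. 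The resolvent identity $A(A+\lambda I)^{-1}=I-\lambda(A+\lambda I)^{-1}$ applied to $A=\widetilde{K}$ and $A=\bar{K}$ then gives the exact training-point values $\tilde{h}^{u}(\boldsymbol{x}_i)-\bar{h}(\boldsymbol{x}_i)=-\lambda(M_{\widetilde{K},\bar{K}}\boldsymbol{y})_i$; Cauchy–Schwarz combined with $\|\boldsymbol{y}\|_{2}\le\sqrt{n}$ then yields the empirical-average bound $\frac{1}{n}\sum_{i=1}^{n}|\tilde{h}^{u}(\boldsymbol{x}_i)-\bar{h}(\boldsymbol{x}_i)|\le \lambda\|M_{\widetilde{K},\bar{K}}\|_{2}$, which will furnish the first summand on the right-hand side of \eqref{zongeq}.

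To lift this empirical average to the true expectation, I apply McDiarmid's inequality to the clipped difference (valued in $[-2,2]$), which delivers the $6\sqrt{\log(4/\delta)/(2n)}$ term after a union bound over the two high-probability events used in the argument (hence $4/\delta$ inside the logarithm). The remaining empirical-to-population bias is controlled via Rademacher complexity, split into two pieces corresponding to $\tilde{h}^{u}$ (living in $\mathcal{H}_{\widetilde{K}}$) and $\bar{h}$ (living in $\mathcal{H}_{\bar{K}}$). Using $\tilde{\boldsymbol{\alpha}}-\bar{\boldsymbol{\alpha}}=M_{\widetilde{K},\bar{K}}\boldsymbol{y}$ together with the KRR regularised objective, I obtain $\|\tilde{h}^{u}-\bar{h}\|_{\mathcal{H}_{\widetilde{K}}}^{2}\lesssim n(1+\lambda\|M_{\widetilde{K},\bar{K}}\|_{2})\|M_{\widetilde{K},\bar{K}}\|_{2}/\lambda$; plugging this into the standard bound $R_{n}(\{h:\|h\|_{\mathcal{H}_{\widetilde{K}}}\le B\})\le B\sqrt{\mathrm{Tr}(\widetilde{K})}/n\le B/\sqrt{n}$ (since $\mathrm{Tr}(\widetilde{K})\le n$) yields the $8\sqrt{(1+\lambda\|M_{\widetilde{K},\bar{K}}\|_{2})\|M_{\widetilde{K},\bar{K}}\|_{2}}$ term. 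Likewise, the constant $\bar{h}$ has $\|\bar{h}\|_{\mathcal{H}_{\bar{K}}}\le\sqrt{D}\,n/(D\lambda+n)$, and the same Rademacher bound applied in $\mathcal{H}_{\bar{K}}$ (with $\mathrm{Tr}(\bar{K})=n/D$) produces the $8\sqrt{Dn}/(D\lambda+n)$ summand.

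The main obstacle is the sharp RKHS-norm bookkeeping. The crude bound $\|\tilde{h}^{u}\|_{\mathcal{H}_{\widetilde{K}}}\le\sqrt{n/\lambda}$ coming directly from the regularised objective is far too loose to recover the $\sqrt{(1+\lambda\|M\|_{2})\|M\|_{2}}$ dependence; the essential trick is to exploit that $\tilde{h}^{u}-\bar{h}$ lies on the image of $M_{\widetilde{K},\bar{K}}$ (via the identity $\tilde{\boldsymbol{\alpha}}-\bar{\boldsymbol{\alpha}}=M_{\widetilde{K},\bar{K}}\boldsymbol{y}$), which effectively contracts the relevant RKHS norm by an additional factor of order $\sqrt{\|M\|_{2}}$. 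Once this sharper norm bound is in place, collecting the four contributions by the union bound yields \eqref{zongeq}.
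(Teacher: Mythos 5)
Your overall strategy coincides with the paper's: split the expectation into the empirical average plus a uniform-convergence term; obtain the empirical part exactly as you do, via the resolvent identity $K(K+\lambda I)^{-1}=I-\lambda(K+\lambda I)^{-1}$, Cauchy--Schwarz, and $\|Y\|_2\le\sqrt n$; and control the generalization gap by the Rademacher complexity of a feature-space ball whose radius is then expressed through $\|M_{\widetilde K,\bar K}\|_2$. However, there is one genuine gap in your uniform-convergence step. You plug the data-dependent radius $B$ (the norm of the learned predictor) into the fixed-radius bound $R_n(\{h:\|h\|\le B\})\le B\sqrt{{\rm Tr}(\widetilde K)}/n$. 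That bound is only valid for a radius chosen before seeing the sample; here $B$ depends on $S$ through $Y$ and $\widetilde K$, so the step as written is not justified. The paper repairs exactly this point by a stratification argument: it applies Lemma~\ref{th33} to the balls $\{\|\boldsymbol{\omega}\|\le\gamma\}$ for every integer $\gamma$ with confidence $\delta/(2\gamma^2)$ and takes a union bound over $\gamma$. This peeling is also the true source of the constant $8$ and of the $\log(4/\delta)$, via $\sqrt{\log(4\lceil\|\boldsymbol{\omega}\|\rceil^2/\delta)}\le\lceil\|\boldsymbol{\omega}\|\rceil+\sqrt{\log(4/\delta)}$, not a union bound over two events as you state. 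You need to add this localization for the argument to go through.

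A second, more minor issue is the norm bookkeeping. Your claimed bound $\|\tilde h^{u}-\bar h\|^2_{\mathcal H_{\widetilde K}}\lesssim n(1+\lambda\|M_{\widetilde K,\bar K}\|_2)\|M_{\widetilde K,\bar K}\|_2/\lambda$ carries a spurious $1/\lambda$ that would propagate into the final bound, and since $\bar h$ is an expansion in $\bar{\mathcal K}$ rather than $\widetilde{\mathcal K}$, the quantity $\|\tilde h^{u}-\bar h\|^2_{\mathcal H_{\widetilde K}}$ is not simply $(\tilde{\boldsymbol\alpha}-\bar{\boldsymbol\alpha})^{\rm T}\widetilde K(\tilde{\boldsymbol\alpha}-\bar{\boldsymbol\alpha})$. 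The paper sidesteps both problems by bounding the radius of the single class containing $\tilde h$, namely $\|\widetilde{\boldsymbol\omega}^{\star}\|^2=Y^{\rm T}(\widetilde K+\lambda I)^{-1}\widetilde K(\widetilde K+\lambda I)^{-1}Y$, and splitting this scalar quadratic form into its $\bar K$-counterpart, which equals $\frac{D}{(D\lambda+n)^2}(\sum_i y_i)^2$ and yields the $8\sqrt{Dn}/(D\lambda+n)$ term, plus a difference shown algebraically to equal $Y^{\rm T}[I-\lambda M_{\widetilde K,\bar K}-2\lambda(\bar K+\lambda I)^{-1}]M_{\widetilde K,\bar K}Y\le n(1+\lambda\|M_{\widetilde K,\bar K}\|_2)\|M_{\widetilde K,\bar K}\|_2$. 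Note also that since $\bar h$ is a constant it contributes no generalization gap of its own, so the $8\sqrt{Dn}/(D\lambda+n)$ term is not a Rademacher complexity attached to $\bar h$ but a piece of the radius of the class containing $\tilde h$.
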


To prove Theorem~\ref{noisyth}, we can further bound $\Vert M_{\widetilde{K}, \bar{K}}  \Vert_{2}$ as
\begin{equation}\label{geometric}
\Vert M_{\widetilde{K}, \bar{K}}\Vert_{2}\le\frac{\frac{n}{\lambda^{2}}\left( 1-p\right)\left( 1+\frac{1}{D}\right)}{1- \frac{n}{\lambda} \left( 1-p\right)\left( 1+\frac{1}{D}\right)}.
\end{equation}
The detailed proof is given  in Appendix~\ref{proofth}.
We point out that it can be verified  $\lambda\Vert M_{\widetilde{K}, \bar{K}}\Vert_{2}$ bounds  $\frac{1}{n}\sum_{i=1}^{n}{\left|\tilde{h}\left( \boldsymbol{x}_{i}\right)-\bar{h}\left( \boldsymbol{x}_{i}\right)\right|}$, which is the empirical difference between $\tilde{h}$ and $\bar{h}$.

In Theorem~\ref{noisyth}, we do not assume any prior information on the unknown distribution $\mathcal{D}$. In practice, to guarantee accurate predictions, learners prefer balanced training samples  \cite{Lindstrom2011,wang2023causal}, where the amount of data belonging to different categories is the same.
\begin{definition}
	(Balanced labels) In binary or multi-class classification tasks, assume that data are drawn from $\mathcal{X}\times\mathcal{Y}$ with respect to a discrete or continuous distribution $\mathcal{D}$.  The labels $y$s generated from $\mathcal{D}$ are called  balanced and normalized, if the labels for different categories are evenly distributed, and $\mathop{\mathbb{E}}\limits_{\left( \boldsymbol{x},y\right) \sim\mathcal{D}}{y}=0$.
\end{definition}

With this additional prior information on the distribution $\mathcal{D}$, we can tighten the upper bound in Theorem~\ref{noisyth} by reducing the second term as stated in the following corollary.

\begin{corollary}\label{noisyth1}
	In addition to the setting stated in  Theorem~\ref{noisyth},  assume that the labels $y$s generated from $\mathcal{D}$ are balanced. For any $0\textless\delta\textless 1$, with probability of at least $1-\delta$, we have
	\begin{align}
	&~\mathop{\mathbb{E}}\limits_{\left( \boldsymbol{x},y\right) \sim\mathcal{D}}{\left|\tilde{h}\left( \boldsymbol{x}\right) -\bar{h}\left( \boldsymbol{x}\right)\right|}\nonumber
	\\
	&\le f\left( \frac{n}{\lambda} \left( 1-p\right)\left( 1+\frac{1}{D}\right)\right)+\frac{8\sqrt{2D\log \frac{4}{\delta}}}{D\lambda+n} +6\sqrt{\frac{\log{\frac{8}{\delta}} }{2n}}.\label{noisyb1}
	\end{align}
\end{corollary}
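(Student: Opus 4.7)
The plan is to reopen the proof of Lemma~\ref{noisylemma} and sharpen only the second summand $\frac{8\sqrt{Dn}}{D\lambda+n}$, which is the only term in that bound depending on neither the noise strength nor the random draw of $S$ and which ultimately controls a purely label-driven quantity. Indeed, the worst-case predictor $\bar{h}(\boldsymbol{x})=\frac{1}{D\lambda+n}\sum_{i=1}^{n}y_i$ is independent of $\boldsymbol{x}$, so this summand records a bound on $|\sum_{i=1}^{n}y_i|$ which, in the unconditioned setting of Theorem~\ref{noisyth}, was estimated using only $|y_i|\le 1$.

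First I would trace through the derivation of Lemma~\ref{noisylemma} and isolate the step where the factor $\sqrt{n}$ inside $\sqrt{Dn}$ arises. The $\sqrt{D}$ piece comes from the rank-one structure $\bar{K}=\frac{1}{D}J$ via the eigenstructure of $(\bar{K}+\lambda I)^{-1}$, whereas the $\sqrt{n}$ enters as an upper bound on $|\boldsymbol{1}^{\top}\boldsymbol{y}|/\sqrt{n}\le\sqrt{n}$ (equivalently, on $(\sum_i y_i)^2\le n^2$ using $y_i\in[-1,1]$). Every other step of Lemma~\ref{noisylemma} relies only on $|y_i|\le 1$ and can be copied verbatim, so the whole task reduces to replacing this particular $\sqrt{n}$ by a sharper estimate of $|\boldsymbol{1}^{\top}\boldsymbol{y}|/\sqrt{n}$.

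Next I would invoke the balanced-label hypothesis $\mathbb{E}_{(\boldsymbol{x},y)\sim\mathcal{D}}[y]=0$ and apply Hoeffding's inequality to the i.i.d.\ mean-zero bounded variables $y_1,\ldots,y_n\in[-1,1]$: with probability at least $1-\delta/2$,
\[
\Bigl|\frac{1}{n}\sum_{i=1}^{n}y_i\Bigr|\le\sqrt{\frac{2\log(4/\delta)}{n}},\quad\text{i.e.,}\quad|\boldsymbol{1}^{\top}\boldsymbol{y}|\le\sqrt{2n\log(4/\delta)}.
\]
Substituting $|\boldsymbol{1}^{\top}\boldsymbol{y}|/\sqrt{n}\le\sqrt{2\log(4/\delta)}$ for the trivial $\sqrt{n}$ inside the second summand of Lemma~\ref{noisylemma} upgrades $\frac{8\sqrt{Dn}}{D\lambda+n}$ to $\frac{8\sqrt{2D\log(4/\delta)}}{D\lambda+n}$, which is exactly the middle term of Eq.~\eqref{noisyb1}.

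Finally I would apportion the failure budget by a union bound: $\delta/2$ for the Hoeffding event above and $\delta/2$ for the concentration step already present in Lemma~\ref{noisylemma}. Applying the latter at confidence $\delta/2$ simply relabels $\log(4/\delta)$ as $\log(8/\delta)$ in its last summand, matching the third term of Eq.~\eqref{noisyb1}, while the deterministic first term $f(\cdot)$ carries over unchanged. The main obstacle is conceptual rather than computational: one must verify that the $\sqrt{n}$ in Lemma~\ref{noisylemma} genuinely originates from $|\boldsymbol{1}^{\top}\boldsymbol{y}|$ and not from $\|\boldsymbol{y}\|_2$, since $\|\boldsymbol{y}\|_2\le\sqrt{n}$ is insensitive to balancing and no sharpening would be possible from a $\|\boldsymbol{y}\|_2$-based estimate.
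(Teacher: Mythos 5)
Your proposal is correct and follows essentially the same route as the paper: both isolate the term $\frac{8}{\sqrt{n}}\sqrt{Y^{\rm T}(\bar K+\lambda I)^{-1}\bar K(\bar K+\lambda I)^{-1}Y}=\frac{8\sqrt{D}}{D\lambda+n}\,\frac{|\sum_i y_i|}{\sqrt{n}}$ (which is genuinely $|\boldsymbol{1}^{\top}\boldsymbol{y}|$-based because $\bar K\propto J$ is rank one), replace the trivial bound $|\sum_i y_i|\le n$ by Hoeffding under the balanced-label assumption, and split the confidence budget $\delta/2+\delta/2$, which turns $\log(4/\delta)$ into $\log(8/\delta)$ in the last term. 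No gaps.
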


The corresponding statements concerning the red and yellow regions and their boundaries in Fig.~\ref{fig:summary} still hold for the upper bound (\ref{noisyb1}). Moreover, in this case, the worst hypothesis $\bar{h}$ behaves like a random-guess classifier and the hypothesis inferred by  the noisy quantum kernel $\tilde{h}$ tends to perform no better than random guess in cases represented by the red regions.

We now consider the impact of the statistical measurement noise on the prediction capability of quantum kernel methods.

\begin{theorem}\label{estimatedth}
	In addition to the setting stated in  Theorem~\ref{noisyth},  assume that we perform $m$ measurements to compute the value of each kernel. For any $0\textless\delta\textless 1$, with probability of at least $1-\delta-ne^{{-{\lambda}^{2}m}/{4n}}$, we have,
	\begin{align}
	&~\mathop{\mathbb{E}}\limits_{\left( \boldsymbol{x},y\right) \sim\mathcal{D}}{\left|\hat{h}\left( \boldsymbol{x}\right) -\bar{h}\left( \boldsymbol{x}\right)\right|}\nonumber
	\\
	&\le f\left( \frac{n}{\lambda} \left( 1-p\right)\left( 1+\frac{1}{D}\right)+\frac{n}{\lambda} \sqrt{\frac{\log{\frac{4n^{2}}{\delta}}}{2m}}\right)\nonumber
	\\
	&\quad+\frac{8\sqrt{Dn}}{D\lambda+n} +6\sqrt{\frac{\log{\frac{8}{\delta}} }{2n}},\label{estimatedb}
	\end{align}
	where $f\left( z\right) =\frac{z+8\sqrt{\frac{z}{\lambda}}}{1-z}$.
\end{theorem}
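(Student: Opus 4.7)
The plan is to follow the template of Theorem~\ref{noisyth}'s proof with the estimated kernel matrix $\widehat{K}$ in place of the exact noisy kernel matrix $\widetilde{K}$, incorporating two additional ingredients: a Hoeffding-type concentration bound controlling the deviation of $\widehat{K}$ from $\widetilde{K}$ caused by finite measurements, together with the positive-definiteness event from Lemma~\ref{positive} which underpins the definition of $\hat{h}$.

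First, I would establish an analogue of Lemma~\ref{noisylemma} with $\hat{h}$ in place of $\tilde{h}$. Conditioning on the event $\mathcal{A}$ that $\widehat{K}+\lambda I$ is positive definite, which holds with probability at least $1-ne^{-\lambda^{2}m/(4n)}$ by Lemma~\ref{positive}, the hypothesis $\hat{h}$ is well defined via Eq.~\eqref{optimalh2} and the proof of Lemma~\ref{noisylemma} carries over essentially verbatim since it uses only the positive definiteness of the regularized kernel matrix. Allocating a probability budget of $\delta/2$ to the sample-level concentration, this gives
\begin{align*}
&\mathop{\mathbb{E}}\limits_{(\boldsymbol{x},y)\sim\mathcal{D}}|\hat{h}(\boldsymbol{x})-\bar{h}(\boldsymbol{x})| \\
&\le \lambda\|M_{\widehat{K},\bar{K}}\|_{2}+8\sqrt{(1+\lambda\|M_{\widehat{K},\bar{K}}\|_{2})\|M_{\widehat{K},\bar{K}}\|_{2}} \\
&\quad +\tfrac{8\sqrt{Dn}}{D\lambda+n}+6\sqrt{\tfrac{\log(8/\delta)}{2n}},
\end{align*}
with probability at least $1-\delta/2$ over the draw of $S$ conditioned on $\mathcal{A}$, where the $\log(8/\delta)$ replaces $\log(4/\delta)$ precisely because the budget has been halved.

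Next, I would bound $\|M_{\widehat{K},\bar{K}}\|_{2}$ by decomposing $\widehat{K}-\bar{K}=(\widehat{K}-\widetilde{K})+(\widetilde{K}-\bar{K})$. The deterministic term satisfies $\|\widetilde{K}-\bar{K}\|_{2}\le n(1-p)(1+1/D)$ as used in deriving Eq.~\eqref{geometric}. For the stochastic term, each entry $\widehat{K}_{ij}$ is an average of $m$ independent Bernoulli variables with mean $\widetilde{K}_{ij}$; Hoeffding's inequality combined with a union bound over the $n^{2}$ entries (each with failure probability $\delta/(2n^{2})$) yields $|\widehat{K}_{ij}-\widetilde{K}_{ij}|\le\sqrt{\log(4n^{2}/\delta)/(2m)}$ simultaneously for all $i,j$ with probability at least $1-\delta/2$, and the crude symmetric-matrix bound $\|A\|_{2}\le n\max_{i,j}|A_{ij}|$ upgrades this to $\|\widehat{K}-\widetilde{K}\|_{2}\le n\sqrt{\log(4n^{2}/\delta)/(2m)}$. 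Plugging these two bounds into the Neumann-series argument underlying Eq.~\eqref{geometric}, whose key ingredient $\|(\bar{K}+\lambda I)^{-1}\|_{2}=1/\lambda$ (from the spectrum $\{n/D,0,\dots,0\}$ of $\bar{K}$) is unchanged, delivers
\[
\|M_{\widehat{K},\bar{K}}\|_{2}\le\frac{z/\lambda}{1-z},\ z=\tfrac{n}{\lambda}(1-p)(1+\tfrac{1}{D})+\tfrac{n}{\lambda}\sqrt{\tfrac{\log(4n^{2}/\delta)}{2m}}.
\]

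Finally, I would substitute this into the Lemma~\ref{noisylemma}-style bound above and collapse the first two terms into $f(z)$ by the same algebraic manipulation as in the proof of Theorem~\ref{noisyth}, namely $\lambda\|M\|_{2}\le z/(1-z)$ and $8\sqrt{(1+\lambda\|M\|_{2})\|M\|_{2}}\le 8\sqrt{z/\lambda}/(1-z)$. A union bound across the three failure events (positive definiteness, Hoeffding concentration, sample concentration) yields the overall success probability $1-\delta-ne^{-\lambda^{2}m/(4n)}$. The main obstacle I anticipate is calibrating the probability budgets cleanly so that the factor $\log(4n^{2}/\delta)$ lands precisely inside the argument of $f$ while the additive term picks up $\log(8/\delta)$; every other step is a mechanical reapplication of techniques already used to prove Theorem~\ref{noisyth}.
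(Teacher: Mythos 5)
Your proposal is correct and follows essentially the same route as the paper: Lemma~\ref{positive} for well-definedness of $\hat{h}$, the Lemma~\ref{noisylemma}-type bound applied with $\widehat{K}$, the triangle-inequality split $\|\widehat{K}-\bar{K}\|_2\le\|\widehat{K}-\widetilde{K}\|_2+\|\widetilde{K}-\bar{K}\|_2$ with an entrywise Hoeffding union bound giving $\|\widehat{K}-\widetilde{K}\|_2\le n\sqrt{\log(4n^2/\delta)/(2m)}$, the matrix-inverse perturbation lemma, and a three-way union bound with $\delta_1=\delta_2=\delta/2$. The only cosmetic difference is that you pass through $\max_{i,j}|\widehat{K}_{ij}-\widetilde{K}_{ij}|$ while the paper passes through the Frobenius norm; these yield the identical bound.
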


From Theorem~\ref{estimatedth},  it is clear that once the number of measurement shots, $m=\mathrm{\Omega}\left( n^{2+\epsilon} \right)$ with $\epsilon\textgreater0$, the upper bound Eq.~(\ref{estimatedb})  can be reduced to Eq.~(\ref{noisyb}), which corresponds to the ideal case where an infinite number of measurement shots is implicitly assumed. This implies that when evaluating quantum kernels, the number of measurement shots should be set at least $n^{2+\epsilon} $ to alleviate the negative impact of the measurement statistical noise on the prediction error.

\section{\label{experiments} Numerical Experiments}

\begin{figure*}[tphb]
	\centering
	\includegraphics[width=0.85\linewidth]{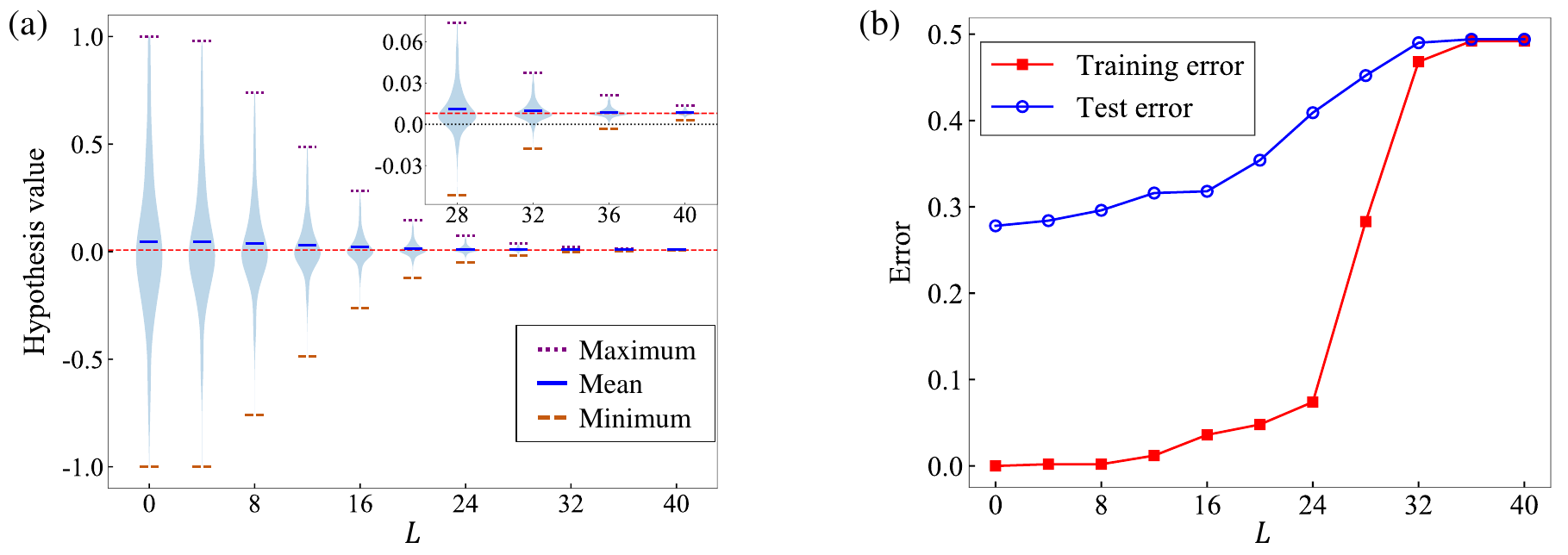}
	\caption{Prediction capability of quantum kernel methods under depolarization noise of different numbers of layers. Here,  training sample size: $n=500$, test sample size: $n=500$,  depolarization noise rate: $\tilde{p}=0.1$, hyperparameter: $\lambda=0.5$, and number of noisy layers: $L$. (a) Concentration of the noisy optimal hypothesis $\tilde{h}$ towards the worst hypothesis $\bar{h}$ as $L$ increases.  The red dashed baseline denotes the value of $\bar{h}$, and the shaded areas denote the relative frequency of each hypothesis value over the 500 test samples, with the parma dotted, blue line, and dark orange dash representing the maximum, mean, and  minimum values of $\tilde{h}$, respectively. (b) The training error (red square) and test error (blue circle) versus $L$. There is a phase transition for the training error at $L=24$.}
	\label{fig:decay}
\end{figure*}

In this section, we validate the theoretical limitation of noisy quantum kernel methods via classification tasks. Following the results in \cite{Huang2021p} and \cite{havlivcek2019}, which describe the  power of  quantum data and quantum kernel methods, respectively, we conduct experiments on the fashion-MNIST dataset \cite{xiao2017fashion}, which is  more challenging  for classification than on the MNIST data. For binary classification, we identify images as shirts or dresses.

As in \cite{Huang2021p}, we first transform each original $28\times28$ grayscale image into a $10$-dimensional vector using principal component analysis \cite{jolliffe2002}. Then we use the IQP-type embedding circuit composed of single-qubit and $2$-qubit unitary gates \cite{havlivcek2019} to embed the $10$-dimensional vector into the Hilbert space of $N=10$ qubits. Specifically, the encoded quantum state vector reads
\begin{equation}\label{key}
|\varphi\left(\boldsymbol{x} \right)\rangle=U_{E}\left(\boldsymbol{x} \right)  |0\rangle^{\otimes N}=U_{Z}\left(\boldsymbol{x} \right) H^{\otimes N} U_{Z}\left(\boldsymbol{x} \right) H^{\otimes N} |0\rangle^{\otimes N},
\end{equation}
where $H^{\otimes N}$ denotes the Hadamard gate acting on all qubits in parallel, and
\begin{equation}\label{key}
U_{Z}\left(\boldsymbol{x} \right)=\exp\left(\sum_{i=1}^{N}{x_{i}Z_{i}} + \sum_{i=1}^{N}{\sum_{j=1}^{N}{x_{i}x_{j}Z_{i}Z_{j}}} \right),
\end{equation}
with $x_{i}$ denoting the $i$-th element of the vector $\boldsymbol{x}$ and $Z_{i}$ denotes the Pauli-$Z$ operator acting on the $i$-th qubit. It was stated in~\cite{havlivcek2019} that the embedding circuit $U_{E}\left(\boldsymbol{x} \right)$ provides a quantum advantage as it is hard to simulate the circuit classically.

For binary classification, we employ the sign of the optimal hypothesis  $\tilde{h}$ to predict labels of test samples, and take the frequency of misclassification over the training sample (test sample) as the proxy for the training error (prediction error). In the numerical experiments, we utilize both training and test samples of size $n=500$, which is exponentially large with $N=10$ ($n=q^N$ with $q=1.86$). We set the strength of depolarization noise $\tilde{p}=0.1$, and the regularization hyperparameter $\lambda=0.5$.

From Fig.~\ref{fig:decay}(a), as the number of noisy layers increases, the values of  $\tilde{h}$ on the test samples do  converge to an uninformative value returned by the worst hypothesis $\bar{h}$. The convergency behavior coincides with the demarcation line as illustrated in Fig.~\ref{fig:summary}, namely, when $L>\log_{(1-\tilde{p})^{-2}}500\approx 30$, $\mathop{\mathbb{E}}{\left|\tilde{h}\left( \boldsymbol{x}\right) -\bar{h}\left( \boldsymbol{x}\right)\right|}\approx 0$. When $L=40$, all the values of $\tilde{h}$ is positive, and all test samples will be labeled in the same class as that returned by $\bar{h}$, which is completely uninformative.
Figure \ref{fig:decay}(b) depicts the practical performances of the training error and test error as $L$ increases. For the training error, there is a phase transition at $L=24$, which is owing to the accumulated noise in the circuit.
The training error converges to the error of $\bar{h}$, and the test error tends to be independent of the test samples, and the prediction is no better than the random guess.

\section{\label{conclusion}Conclusion}
In this paper, we investigate the power and limitations of quantum kernel methods under quantum global depolarization noise. We theoretically depict the concentration speed of predictions  of the optimal hypothesis inferred by noisy quantum kernels. Our techniques can be generalized to investigate the impact of  other typical quantum noises.  Our results hold for a wide class of quantum encoding strategies, and are applicable not only on shallow NISQ circuits, but also on future large-scale quantum devices. Therefore, our results on the one hand make a clear warning against utilizing quantum kernel methods to demonstrate quantum advantages in the NISQ era, and on the other hand provide crucial guidelines in developing practical machine learning approaches for future quantum computation.

\begin{acknowledgments}
	B.~Q. acknowledges the support of the National Natural Science Foundation of China under No.~61833010, and D.~D. acknowledges the support of the Australian Research Council’s Future Fellowship funding scheme under project FT220100656 and the U.S. Office of Naval Research Global under Grant No. N62909-19-1-2129.
	
\end{acknowledgments}

\bibliography{reference}

\onecolumngrid
\appendix
\section{\label{lemmas}Technical Lemmas}
For self-consistency, we firstly present two widely used concentration inequalities for independent random variables and matrices, respectively.
\begin{lemma}\label{hoeffding}
	\textbf{\em (Hoeffding's inequality)} {\em(Lemma D.1,  Ref.~\cite{mohri2018})} Let $X_{1}, \cdots, X_{n}$ be independent random variables with $X_{i}$ taking values in $\left[ a_{i}, b_{i}\right] $ for all $i\in \left[ n\right] $. Then, for any $\epsilon \textgreater 0$ and $S_{n}=\sum_{i=1}^{n}{X_{i}}$,
	\begin{align*}
	\mathbb{P}\left[S_{n} - \mathbb{E} S_{n} \ge \epsilon \right] &\leq e^{-2\epsilon^{2} / \sum_{i=1}^{n}{\left( b_{i} - a_{i}\right) ^{2}}},
	\\
	\mathbb{P}\left[S_{n} - \mathbb{E} S_{n} \le -\epsilon \right] &\leq e^{-2\epsilon^{2} / \sum_{i=1}^{n}{\left( b_{i} - a_{i}\right) ^{2}}}.
	\end{align*}
\end{lemma}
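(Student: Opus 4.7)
The plan is to prove both tail bounds by the classical Chernoff/exponential-moment method, with the lower tail obtained from the upper tail by symmetry (replace $X_i$ by $-X_i$, noting that $-X_i \in [-b_i, -a_i]$ has the same interval length). So I would focus on the upper tail. For any $t > 0$, Markov's inequality applied to the nondecreasing function $u \mapsto e^{tu}$ yields
\begin{equation*}
\mathbb{P}\!\left[S_n - \mathbb{E} S_n \ge \epsilon\right] \le e^{-t\epsilon}\, \mathbb{E}\!\left[e^{t(S_n - \mathbb{E} S_n)}\right].
\end{equation*}
By independence, the moment generating function factorizes, so I would write
\begin{equation*}
\mathbb{E}\!\left[e^{t(S_n - \mathbb{E} S_n)}\right] = \prod_{i=1}^n \mathbb{E}\!\left[e^{t(X_i - \mathbb{E} X_i)}\right].
\end{equation*}

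The main technical step, and the real obstacle, is a one-variable bound on the centered MGF, namely Hoeffding's lemma: if $Y$ is a random variable with $\mathbb{E} Y = 0$ and $Y \in [a,b]$ almost surely, then $\mathbb{E}[e^{tY}] \le e^{t^2 (b-a)^2 / 8}$. I would prove this by convexity: since $e^{ty}$ is convex, for $y \in [a,b]$ we have $e^{ty} \le \frac{b-y}{b-a}e^{ta} + \frac{y-a}{b-a}e^{tb}$; taking expectation and using $\mathbb{E} Y = 0$ gives $\mathbb{E}[e^{tY}] \le \frac{b}{b-a}e^{ta} - \frac{a}{b-a}e^{tb} \equiv e^{\phi(t)}$ with $\phi(t) = \log\!\bigl(\tfrac{b}{b-a}e^{ta} - \tfrac{a}{b-a}e^{tb}\bigr)$. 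A direct computation shows $\phi(0) = \phi'(0) = 0$ and $\phi''(t) \le (b-a)^2/4$ for all $t$, the latter being a short calculation treating $\phi''(t)$ as $u(1-u)$ for a probability $u$ and using $u(1-u) \le 1/4$. Taylor's theorem with remainder then gives $\phi(t) \le t^2(b-a)^2/8$, establishing the lemma.

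Applying the lemma to each $Y_i = X_i - \mathbb{E} X_i \in [a_i - \mathbb{E} X_i,\, b_i - \mathbb{E} X_i]$, an interval of length $b_i - a_i$, the product bound becomes
\begin{equation*}
\prod_{i=1}^n \mathbb{E}\!\left[e^{t Y_i}\right] \le \exp\!\left(\frac{t^2}{8}\sum_{i=1}^n (b_i - a_i)^2\right).
\end{equation*}
Combining with the Markov step yields, for every $t > 0$,
\begin{equation*}
\mathbb{P}\!\left[S_n - \mathbb{E} S_n \ge \epsilon\right] \le \exp\!\left(-t\epsilon + \frac{t^2}{8}\sum_{i=1}^n (b_i - a_i)^2\right).
\end{equation*}
The right-hand side is minimized at $t^\star = 4\epsilon / \sum_{i=1}^n (b_i - a_i)^2$, which is positive, and substituting $t^\star$ gives exactly the claimed upper-tail bound $\exp\!\bigl(-2\epsilon^2 / \sum_{i=1}^n (b_i - a_i)^2\bigr)$. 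Finally, I would apply the upper-tail bound just proved to the variables $X_i' = -X_i$, which are independent with $X_i' \in [-b_i, -a_i]$ and $\mathbb{E} S_n' = -\mathbb{E} S_n$, to obtain the matching lower-tail inequality, completing the proof.
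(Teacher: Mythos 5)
Your proof is correct and is the standard Chernoff-bound argument combined with Hoeffding's lemma on the centered moment generating function, followed by optimization over $t$; the paper itself gives no proof of this lemma, citing it as Lemma D.1 of Ref.~\cite{mohri2018}, whose proof proceeds in exactly this way. All steps, including the convexity bound $\phi''(t)\le (b-a)^2/4$ and the choice $t^\star = 4\epsilon/\sum_{i=1}^n(b_i-a_i)^2$, check out.
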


\begin{lemma}\label{hoeffdingmatrix}
	\textbf{\em (Matrix Hoeffding)} {\em(Corollary 4.2, Ref.~\cite{Mackey2014})} Let $Y^{\left( 1\right) }, \cdots, Y^{\left( m\right) }$ be independent random Hermitian $n\times n$ matrices and $A^{\left( 1\right) }, \cdots, A^{\left( m\right) }$ be deterministic Hermitian $n\times n$ matrices. Assume that for each $k\in \left[ m\right] $,
	\begin{equation*}
	\mathop{\mathbb{E}}{\left[ Y^{\left( k\right) }\right] }=0\quad  {\rm and} \quad{\left[  Y^{\left( k\right) }\right] }^2 \preceq{\left[ A^{\left( k\right) }\right] }^2.
	\end{equation*}
	Here, $X \preceq Y$ means that  the matrix $Y\! - \! X$ is positive semi-definite. Then, for all $t\geq 0$,
	\begin{align}
	\mathbb{P}\left[ \lambda_{\max} \left(\sum_{k=1}^{m}{Y^{\left( k\right) }} \right) \geq t \right] &\leq ne^{{-{t}^{2}}/{2{\sigma}^{2}}},
	\\
	\mathbb{P}\left[ \lambda_{\min} \left(\sum_{k=1}^{m}{Y^{\left( k\right) }} \right) \leq -t \right] &\leq ne^{{-{t}^{2}}/{2{\sigma}^{2}}}\label{matrixho},
	\end{align}
	where ${\sigma}^{2}=\frac{1}{2}\Big{\|}\sum_{k=1}^{m}{\left\lbrace {\left[ A^{\left( k\right) }\right]}^{2}  + \mathop{\mathbb{E}}{{\left[ Y^{\left( k\right) }\right]}^{2}} \right\rbrace }\Big{\|}_{2}$,  $\lambda_{\max}\left( A\right) $ and $\lambda_{\min}\left( A\right) $ denote the maximal eigenvalue and the minimal eigenvalue of  matrix $A$, respectively.
\end{lemma}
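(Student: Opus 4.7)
The plan is to use the matrix Laplace transform method, reducing the tail estimate on $\lambda_{\max}$ to a bound on the trace moment generating function (MGF) of $S = \sum_{k=1}^{m} Y^{(k)}$. First, I would apply a matrix Markov inequality: for any $\theta > 0$,
\[
\mathbb{P}\!\left[\lambda_{\max}(S) \geq t\right] \leq e^{-\theta t}\, \mathbb{E}\!\left[{\rm Tr}\exp(\theta S)\right],
\]
which follows from monotonicity of $x \mapsto e^{\theta x}$ together with the identity $e^{\theta \lambda_{\max}(S)} = \lambda_{\max}(e^{\theta S}) \leq {\rm Tr}(e^{\theta S})$ for Hermitian $S$.

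Next, I would invoke subadditivity of matrix cumulant generating functions, a standard consequence of Lieb's concavity theorem (or, alternatively, of the exchangeable-pair framework used in Ref.~\cite{Mackey2014}): for independent Hermitian summands,
\[
\mathbb{E}\!\left[{\rm Tr}\exp(\theta S)\right] \leq {\rm Tr}\exp\!\Big(\sum_{k=1}^{m} \log \mathbb{E}\exp(\theta Y^{(k)})\Big).
\]
This separates the problem into bounding each per-summand cumulant generating function.

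The hard part will be the per-summand bound, which is where the specific form of the variance proxy $\sigma^2$ enters. For a mean-zero Hermitian $Y$ satisfying $Y^2 \preceq A^2$ almost surely, the target inequality is
\[
\log \mathbb{E}\exp(\theta Y) \preceq \frac{\theta^{2}}{2}\bigl(A^{2} + \mathbb{E} Y^{2}\bigr).
\]
To obtain it I would start from the Taylor representation $e^{\theta Y} = I + \theta Y + \theta^{2}\!\int_{0}^{1}(1-s)\,Y e^{s\theta Y} Y\,ds$, take expectations so that the linear term vanishes, and then control the integral factor using the deterministic bound $Y^{2} \preceq A^{2}$ together with a careful symmetrization to produce the mixed $A^{2} + \mathbb{E} Y^{2}$ shape (this mixed form is exactly what the Stein/exchangeable-pair construction in Ref.~\cite{Mackey2014} delivers most cleanly, which is why that route gives the sharper constants compared with plain Tropp-style matrix Hoeffding).

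Finally, combining the three steps with ${\rm Tr}\exp(\cdot) \leq n\exp(\lambda_{\max}(\cdot))$ yields
\[
\mathbb{P}\!\left[\lambda_{\max}(S) \geq t\right] \leq n\,\exp\!\big(-\theta t + \theta^{2} \sigma^{2}\big),
\]
with $\sigma^{2} = \tfrac{1}{2}\bigl\|\sum_{k}\{(A^{(k)})^{2} + \mathbb{E}(Y^{(k)})^{2}\}\bigr\|_{2}$ as stated. Optimizing over $\theta$ (choosing $\theta \propto t/\sigma^{2}$) produces the Gaussian tail $n e^{-t^{2}/2\sigma^{2}}$. The companion bound~(\ref{matrixho}) on $\lambda_{\min}$ follows by applying the same argument to $-S = \sum_{k}(-Y^{(k)})$, whose summands satisfy identical hypotheses (since $(-Y^{(k)})^{2} = (Y^{(k)})^{2} \preceq (A^{(k)})^{2}$), and using $\lambda_{\min}(S) = -\lambda_{\max}(-S)$.
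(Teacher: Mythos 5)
First, a point of reference: the paper does not prove this lemma at all --- it is imported verbatim as Corollary~4.2 of Ref.~\cite{Mackey2014} --- so there is no internal proof to compare against. Judged as a standalone argument, your sketch has a concrete quantitative gap. If each summand satisfies $\log\mathbb{E}\exp(\theta Y^{(k)})\preceq\frac{\theta^{2}}{2}\bigl[(A^{(k)})^{2}+\mathbb{E}(Y^{(k)})^{2}\bigr]$ as you propose, then the Lieb subadditivity step gives $\mathbb{E}\,{\rm Tr}\exp(\theta S)\leq n\exp(\theta^{2}\sigma^{2})$ with $\sigma^{2}=\tfrac{1}{2}\bigl\|\sum_{k}\{(A^{(k)})^{2}+\mathbb{E}(Y^{(k)})^{2}\}\bigr\|_{2}$, and minimizing $-\theta t+\theta^{2}\sigma^{2}$ over $\theta>0$ (at $\theta=t/(2\sigma^{2})$) yields $n e^{-t^{2}/(4\sigma^{2})}$, not the stated $n e^{-t^{2}/(2\sigma^{2})}$; your final optimization is off by a factor of $2$ in the exponent. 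To close the gap along this route you would need the per-summand bound with constant $\frac{\theta^{2}}{4}$ rather than $\frac{\theta^{2}}{2}$, and establishing any matrix cumulant bound with the mixed $(A^{(k)})^{2}+\mathbb{E}(Y^{(k)})^{2}$ shape is precisely the hard content of the lemma --- your sketch only gestures at it via ``a careful symmetrization,'' which is where the proof actually lives.

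Moreover, the source you cite does not obtain this result by the Lieb-plus-per-summand-cumulants route at all. The classical Laplace-transform argument you outline is known to give matrix Hoeffding with the weaker tail $n e^{-t^{2}/(8\sigma^{2})}$ and variance proxy $\bigl\|\sum_{k}(A^{(k)})^{2}\bigr\|_{2}$; the improved constant $1/2$ together with the averaged proxy $\tfrac{1}{2}\bigl\|\sum_{k}\{(A^{(k)})^{2}+\mathbb{E}(Y^{(k)})^{2}\}\bigr\|_{2}$ is exactly what the method of exchangeable pairs buys. That proof constructs a matrix Stein pair for $S$, bounds the associated conditional variance by $\tfrac{1}{2}\sum_{k}\bigl[(A^{(k)})^{2}+\mathbb{E}(Y^{(k)})^{2}\bigr]$, and integrates a differential inequality for the trace moment generating function to get $\log\mathbb{E}\,{\rm Tr}\exp(\theta S)\leq\log n+\theta^{2}\sigma^{2}/2$ directly, bypassing per-summand cumulant bounds entirely. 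Your treatment of the $\lambda_{\min}$ bound via $-S$ is correct, but the core of the argument as written does not deliver the stated inequality.
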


We can directly obtain the following equivalent form of Eq.~(\ref{matrixho}):
\begin{equation}
\mathbb{P}\left[ \lambda_{\min} \left(\sum_{k=1}^{m}{Y^{\left( k\right) }} +tI \right) \geq 0 \right] \geq 1-ne^{{-{t}^{2}}/{2{\sigma}^{2}}}\label{matrixho1}.
\end{equation}
That is, under  the same assumptions as in Lemma~\ref{hoeffdingmatrix}, for all $t \geq 0$, with probability of at least $1-ne^{{-{t}^{2}}/{2{\sigma}^{2}}}$, the random matrix $\sum_{k=1}^{m}{Y^{\left( k\right) }} +tI$ is positive semi-definite.

Secondly, we present a basic result in machine learning, which provides an upper bound on the generalization error.
\begin{lemma}\label{th33}
	{\em (Theorem 3.3,  Ref.~\cite{mohri2018})}	Let $\mathcal{G}$ be a family of functions mapping from $\mathcal{Z}$ to $\left[ 0,1\right] $. Then, for any $0\textless\delta\textless 1$, with probability of at least $1-\delta$ over the draw of an i.i.d. sample $S=(z_1, \dots, z_n)$  of size $n$ with elements in $\mathcal{Z}$, the following inequality holds for all $g\in \mathcal{G}$:
	\begin{equation*}\label{key}
	\mathop{\mathbb{E}}\limits_{z}{\left[g\left( z\right) \right] }\le\frac{1}{n}\sum_{i=1}^{n}{g\left( z_{i}\right)}+\frac{2}{n}\mathop{\mathbb{E}}\limits_{\boldsymbol{\sigma}}{\left[ \sup_{g\in\mathcal{G} }\sum_{i=1}^{n}{\sigma_{i}g\left(z_{i} \right)  }\right] }+3\sqrt{\frac{\log{\frac{2}{\delta}}}{2n}},
	\end{equation*}
	where $\boldsymbol{\sigma}=\left( \sigma_{1},\cdots,\sigma_{n}\right)^{\rm T} $ with $\sigma_i$s  independent uniform random variables taking value in $\left\lbrace -1,+1\right\rbrace $.
\end{lemma}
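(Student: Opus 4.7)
The plan is to follow the classical McDiarmid-plus-symmetrization route that underlies this standard result from Mohri et al. First I would introduce the worst-case one-sided deviation
\begin{equation*}
\Phi(S) = \sup_{g\in\mathcal{G}}\Big(\mathop{\mathbb{E}}_{z}[g(z)] - \frac{1}{n}\sum_{i=1}^n g(z_i)\Big),
\end{equation*}
which simultaneously controls the generalization gap across the whole class $\mathcal{G}$. Because each $g$ takes values in $[0,1]$, replacing a single coordinate $z_i$ of $S$ changes $\Phi(S)$ by at most $1/n$, so McDiarmid's bounded-differences inequality gives, with probability at least $1-\delta/2$,
\begin{equation*}
\Phi(S) \le \mathop{\mathbb{E}}_S[\Phi(S)] + \sqrt{\frac{\log(2/\delta)}{2n}}.
\end{equation*}

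Next I would perform a symmetrization argument to bound $\mathbb{E}_S[\Phi(S)]$ by Rademacher complexity. Introducing an independent ghost sample $S'=(z_1',\dots,z_n')$, I would rewrite $\mathbb{E}_z[g(z)]$ as $\mathbb{E}_{S'}[\frac{1}{n}\sum_i g(z_i')]$, then use convexity of the supremum (Jensen's inequality, $\sup\mathbb{E}\le\mathbb{E}\sup$) to pass to a single expectation $\mathbb{E}_{S,S'}[\sup_g \frac{1}{n}\sum_i(g(z_i')-g(z_i))]$. Since $g(z_i')-g(z_i)$ is symmetric in distribution under swapping $z_i\leftrightarrow z_i'$, I may multiply each term by an independent Rademacher variable $\sigma_i$ without changing the joint law. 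Applying the triangle inequality to the resulting supremum then yields
\begin{equation*}
\mathop{\mathbb{E}}_S[\Phi(S)] \le \frac{2}{n}\mathop{\mathbb{E}}_{S,\boldsymbol{\sigma}}\Big[\sup_{g\in\mathcal{G}}\sum_{i=1}^n \sigma_i g(z_i)\Big] \equiv 2\,\mathfrak{R}_n(\mathcal{G}),
\end{equation*}
the population Rademacher complexity.

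To arrive at the sample-conditional quantity actually appearing in the statement, I would replace $\mathfrak{R}_n(\mathcal{G})$ by $\widehat{\mathfrak{R}}_S(\mathcal{G}) = \frac{1}{n}\mathbb{E}_{\boldsymbol{\sigma}}[\sup_g \sum_i \sigma_i g(z_i)]$. Since $\widehat{\mathfrak{R}}_S$ is again a bounded-differences function of $S$ with per-coordinate variation at most $1/n$, a second application of McDiarmid gives $\mathfrak{R}_n(\mathcal{G}) \le \widehat{\mathfrak{R}}_S(\mathcal{G}) + \sqrt{\log(2/\delta)/(2n)}$ with probability at least $1-\delta/2$. A union bound combining the two events, together with the factor of $2$ picked up in symmetrization, produces the constant $3=1+2$ in front of $\sqrt{\log(2/\delta)/(2n)}$, and unfolding the definition of $\Phi$ gives the stated inequality uniformly over $g\in\mathcal{G}$.

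The main obstacle will be the symmetrization step: it is where the Rademacher variables actually enter, where convexity of the supremum must be exploited carefully so that the bound passes uniformly through $\mathcal{G}$, and where the crucial factor of $2$ is introduced. Everything else is routine concentration bookkeeping, but one must split the failure probability as $\delta/2+\delta/2$ between the two McDiarmid applications so that the constants combine into exactly the stated $3\sqrt{\log(2/\delta)/(2n)}$ rather than a slightly larger value.
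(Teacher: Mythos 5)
Your proposal is correct and is exactly the canonical argument behind this result; the paper itself offers no proof, simply importing the statement as Theorem 3.3 of Mohri et al., and your McDiarmid--symmetrization--McDiarmid route with the $\delta/2+\delta/2$ split reproducing the constant $3=1+2$ is precisely the textbook proof being cited. Nothing is missing.
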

Here, $\hat{\mathcal{R}}_S(\mathcal{G})=\mathop{\mathbb{E}}\limits_{\boldsymbol{\sigma}}{\left[ \sup_{g\in\mathcal{G} }\frac{1}{n}\sum_{i=1}^{n}{\sigma_{i}g\left(z_{i} \right)  }\right] }$ is referred to as the empirical Rademacher complexity of the set $\mathcal{G}$ with respect to the sample $S$.

Thirdly, we present a lemma that relates the empirical Rademacher complexity of a new set of composite functions of a hypothesis in $\mathcal{H}$ and a Lipschitz function to the empirical Rademacher complexity of the hypothesis set $\mathcal{H}$.
\begin{lemma}\label{talagrand}
	\textbf{\em (Talagrand’s lemma)} {\em(Lemma 5.7, Ref.~\cite{mohri2018})}	Let $\Phi_{1},\cdots,\Phi_{n}$ be $l$-Lipschitz functions from $\mathbb{R}$ to $\mathbb{R}$ and $\boldsymbol{\sigma}=\left( \sigma_{1},\cdots,\sigma_{n}\right)^{\rm T} $ whose elements are independent uniform random variables taking value in $\left\lbrace -1,+1\right\rbrace $. Then, for any hypothesis set $\mathcal{H}$ of real-valued functions, the following inequality holds:
	\begin{equation*}\label{key}
	\frac{1}{n}\mathop{\mathbb{E}}\limits_{\boldsymbol{\sigma}}{\left[ \sup_{h\in\mathcal{H} }\sum_{i=1}^{n}{\sigma_{i}\left( \Phi_{i}\circ h \right)\left(x_{i} \right)  }\right] }\le \frac{l}{n}\mathop{\mathbb{E}}\limits_{\boldsymbol{\sigma}}{\left[ \sup_{h\in\mathcal{H} }\sum_{i=1}^{n}{\sigma_{i}h\left(x_{i} \right)  }\right] }.
	\end{equation*}
\end{lemma}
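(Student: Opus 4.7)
The plan is to prove the contraction inequality by peeling one coordinate at a time, reducing the problem to a one-step swap that replaces a single $\Phi_i$ by the identity with factor $l$. Since the claim is an inequality under $\mathop{\mathbb{E}}_{\boldsymbol{\sigma}}$ and the $\sigma_i$s are i.i.d.\ Rademacher, it suffices to prove that for every fixed sample $x_1, \dots, x_n$ and every $i$, conditioning on $\{\sigma_j\}_{j \ne i}$ and averaging only over $\sigma_i$ yields the desired substitution; iterating $n$ times then removes all $\Phi_i$s simultaneously and gives the stated bound.

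The core one-step calculation proceeds as follows. Fix the index, say $i=n$ for notational ease, abbreviate $U(h) = \sum_{j=1}^{n-1}\sigma_j(\Phi_j \circ h)(x_j)$, and write
\begin{equation*}
\mathop{\mathbb{E}}_{\sigma_n}\Bigl[\sup_{h \in \mathcal{H}}\bigl\{U(h) + \sigma_n \Phi_n(h(x_n))\bigr\}\Bigr]
= \tfrac{1}{2}\sup_{h_1}\bigl\{U(h_1)+\Phi_n(h_1(x_n))\bigr\} + \tfrac{1}{2}\sup_{h_2}\bigl\{U(h_2)-\Phi_n(h_2(x_n))\bigr\}.
\end{equation*}
Merge the two independent suprema into one joint supremum over $(h_1, h_2) \in \mathcal{H}\times\mathcal{H}$, obtaining $\tfrac{1}{2}\sup_{h_1,h_2}\bigl\{U(h_1)+U(h_2)+\Phi_n(h_1(x_n))-\Phi_n(h_2(x_n))\bigr\}$. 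The $l$-Lipschitz assumption on $\Phi_n$ yields $\Phi_n(h_1(x_n)) - \Phi_n(h_2(x_n)) \le l\,|h_1(x_n) - h_2(x_n)|$, so the expression is upper bounded by $\tfrac{1}{2}\sup_{h_1,h_2}\bigl\{U(h_1)+U(h_2)+l\,|h_1(x_n)-h_2(x_n)|\bigr\}$.

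The key trick is then to drop the absolute value using the symmetry of the joint supremum under the swap $(h_1, h_2) \mapsto (h_2, h_1)$: whichever sign realizes the maximum of $\pm l(h_1(x_n)-h_2(x_n))$, one can relabel the dummy variables so that the sign is $+$, giving $\tfrac{1}{2}\sup_{h_1,h_2}\bigl\{U(h_1)+lh_1(x_n)+U(h_2)-lh_2(x_n)\bigr\}$. This factorizes back into independent suprema, which in turn equals $\mathop{\mathbb{E}}_{\sigma_n}\bigl[\sup_{h}\{U(h)+\sigma_n\,l\,h(x_n)\}\bigr]$. Thus one Lipschitz function has been traded for the factor $l$ times the identity under the Rademacher average.

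Finally, iterate the one-step replacement over $i = 1, \dots, n$; at each step a fresh $\Phi_i$ is peeled off, and the total factor accumulated is $l$ per index (factored out at the end after $n$ symmetric steps, or more cleanly extracted at once as in the statement). I expect the main obstacle to be the subtle symmetry argument that removes the absolute value after invoking the Lipschitz bound, because it is where the proof really uses that the supremum is taken over the full class $\mathcal{H}$ rather than over a single hypothesis; all other steps are linear manipulation of expectations and suprema.
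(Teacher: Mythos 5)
Your proof is correct: the one-coordinate peeling, the merge into a joint supremum over $\mathcal{H}\times\mathcal{H}$, the Lipschitz bound, and the symmetry argument that removes the absolute value are exactly the standard argument. The paper does not prove this lemma itself but cites Lemma 5.7 of Ref.~\cite{mohri2018}, and your argument is precisely the proof given there, so nothing further is needed.
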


At last, we present the lemma concerning the total effect of all the quantum depolarizing channels $\mathcal{N}_{\tilde{p}}$.

\begin{lemma}\label{depolarnoise}
	{\em (Lemma 2, Ref.~\cite{Wang2021})} For an $L$-layer quantum circuit $U=\prod_{l=1}^{L}{U_l}$ or channel $\mathcal{E}=\mathcal{E}_L\circ\cdots\circ\mathcal{E}_1$, the noise model where a global depolarizing channel $\mathcal{N}_{\tilde{p}}$ acts after each (unitary or completely positive trace preserving) layer is equivalent to a global depolarizing channel $\mathcal{N}_{p}$ following the entire quantum circuit or channel, where $p=1-\left( 1-\tilde{p}\right) ^{L}$. That is,
	\begin{align*}
	\mathcal{N}_{\tilde{p}}\left\lbrace U_{L} \cdots \mathcal{N}_{\tilde{p}}\left[U_{2} \mathcal{N}_{\tilde{p}}\left(U_{1} \rho U_{1}^{\dagger}\right) U_{2}^{\dagger}\right]  \cdots U_{L}^{\dagger}\right\rbrace =\mathcal{N}_{p}\left( U \rho U^{\dagger}\right),
	\\
	\mathcal{N}_{\tilde{p}}\circ\mathcal{E}_{L}\left\lbrace\cdots\mathcal{N}_{\tilde{p}}\circ\mathcal{E}_{2}\left[\mathcal{N}_{\tilde{p}}\circ\mathcal{E}_{1}\left(\rho\right) \right]  \right\rbrace  =\mathcal{N}_{p}\circ\mathcal{E}\left( \rho\right) .
	\end{align*}
\end{lemma}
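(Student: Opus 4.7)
The plan relies on two elementary observations: (i) the global depolarizing channel commutes through any unital CPTP map, so it can be pushed past each $\mathcal{E}_l$ unchanged, and (ii) two consecutive global depolarizing channels fuse into a single one with composed strength. Given these, the composite noisy circuit collapses onto a single final depolarizing channel applied to the noiseless output, which is exactly what the lemma asserts.

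For (i), every unital CPTP map (in particular, every unitary channel $\rho\mapsto U_l\rho U_l^{\dagger}$) satisfies $\mathcal{E}_l(I/D)=I/D$, so by linearity
\[
\mathcal{E}_l(\mathcal{N}_{\tilde{p}}(\sigma)) = (1-\tilde{p})\mathcal{E}_l(\sigma)+\tilde{p}\mathcal{E}_l(I/D) = (1-\tilde{p})\mathcal{E}_l(\sigma)+\tilde{p}\,I/D = \mathcal{N}_{\tilde{p}}(\mathcal{E}_l(\sigma)),
\]
i.e., $\mathcal{E}_l\circ\mathcal{N}_{\tilde{p}} = \mathcal{N}_{\tilde{p}}\circ\mathcal{E}_l$. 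For (ii), direct substitution (using $\mathcal{N}_{\tilde{p}}(I/D)=I/D$) gives
\[
\mathcal{N}_{\tilde{p}}\circ\mathcal{N}_{\tilde{p}}(\rho) = (1-\tilde{p})^{2}\rho + \bigl[1-(1-\tilde{p})^{2}\bigr]I/D = \mathcal{N}_{1-(1-\tilde{p})^{2}}(\rho),
\]
and iterating $k$ times yields $\mathcal{N}_{\tilde{p}}^{\circ k}=\mathcal{N}_{1-(1-\tilde{p})^{k}}$.

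With these in hand, I would start from the composition $\mathcal{N}_{\tilde{p}}\circ\mathcal{E}_{L}\circ\mathcal{N}_{\tilde{p}}\circ\mathcal{E}_{L-1}\circ\cdots\circ\mathcal{N}_{\tilde{p}}\circ\mathcal{E}_{1}$ and apply (i) repeatedly to slide every internal $\mathcal{N}_{\tilde{p}}$ leftward past the unitaries in front of it, gathering all $L$ copies together at the outermost position. This produces $\mathcal{N}_{\tilde{p}}^{\circ L}\circ\mathcal{E}_{L}\circ\cdots\circ\mathcal{E}_{1}$, and applying (ii) $L-1$ times collapses the leftmost factor to $\mathcal{N}_{1-(1-\tilde{p})^{L}}=\mathcal{N}_{p}$, delivering the desired identity $\mathcal{N}_{p}\circ\mathcal{E}$ (and specializing to the unitary case $\mathcal{E}(\rho)=U\rho U^{\dagger}$).

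The main, if minor, caveat concerns unitality: step (i) fails for non-unital noise such as amplitude damping (where $\mathcal{E}(I/D)\neq I/D$), so as stated the identity is valid only when each $\mathcal{E}_l$ preserves the maximally mixed state. For the present paper this is automatic because every layer of $U_E$ is a unitary, but it should be flagged if one wishes to extend the reduction to non-unital encoders. Beyond this observation, the proof is a one-line algebraic identity plus a clean telescoping; no probabilistic machinery or concentration inequalities are required.
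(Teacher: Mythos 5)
Your proof is correct, and there is in fact no in-paper argument to compare it against: the paper states this lemma as an imported result (Lemma~2 of Ref.~\cite{Wang2021}) and supplies no proof of its own. Your commute-and-fuse derivation is the standard and sound route: linearity of each layer together with $\mathcal{E}_l(I/D)=I/D$ gives $\mathcal{E}_l\circ\mathcal{N}_{\tilde p}=\mathcal{N}_{\tilde p}\circ\mathcal{E}_l$, all $L$ noise factors collect at the output, and the semigroup identity $\mathcal{N}_{\tilde p}^{\circ k}=\mathcal{N}_{1-(1-\tilde p)^{k}}$ closes the telescope. Your unitality caveat is moreover a genuine and correct observation rather than a cosmetic one: for $L=2$ the noisy composition equals $(1-\tilde p)^{2}\,\mathcal{E}_2\circ\mathcal{E}_1(\rho)+\tilde p(1-\tilde p)\,\mathcal{E}_2(I/D)+\tilde p\,I/D$, which coincides with $\mathcal{N}_{p}\circ\mathcal{E}(\rho)$ for $p=1-(1-\tilde p)^{2}$ if and only if $\mathcal{E}_2(I/D)=I/D$, so the identity as stated for arbitrary CPTP layers actually requires every layer after the first to be unital. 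This has no consequence for the paper, where each layer of $U_E(\cdot)$ is unitary and hence unital, but the restriction deserves to be recorded if the lemma is quoted in the stated generality.
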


\section{Proof of Lemma~\ref{noisylemma}}
Firstly, we introduce the following lemma.
\begin{lemma}\label{firstlemma}
	Consider an optimal hypothesis function $h\left( \boldsymbol{x};\boldsymbol{\omega}^{\star}\right)$ in the form of  Eq.~(\ref{optimalh}) associated with a specific kernel matrix $K_{h}$. For any $0\textless\delta\textless 1$, with probability of  at least $1-\delta$ over the draw of an i.i.d. sample $S=\left\lbrace \left( \boldsymbol{x}_{i},y_{i} \right) \right\rbrace_{i=1}^{n}$  of size $n$,   the expected difference of the predictions between  $h\left( \boldsymbol{x};\boldsymbol{\omega}^{\star}\right)$ and $\bar{h}\left( \boldsymbol{x}\right)$ is upper bounded as
	\begin{align}
	\mathop{\mathbb{E}}\limits_{\left( \boldsymbol{x},y\right) \sim\mathcal{D}}{\left|h\left( \boldsymbol{x};\boldsymbol{\omega}^{\star}\right) -\bar{h}\left( \boldsymbol{x}\right)\right|} \le&~\frac{1}{n}\Big{\|} K_{h}{\left(K_{h}+\lambda I \right) }^{-1}Y -\frac{1}{D\lambda+n}JY \Big{\|}_{1} \nonumber
	\\
	&+{\frac{8}{\sqrt{n}}}\bigg{\lceil} \sqrt{Y^{\rm T}{\left(K_{h}+\lambda I \right) }^{-1}K_{h}{\left(K_{h}+\lambda I \right) }^{-1}Y} \bigg{\rceil}+6\sqrt{\frac{\log{\frac{4}{\delta}} }{2n}}\label{lemma1},
	\end{align}
	where $Y={\left( y_{1},\cdots,y_{n}\right) }^{\rm T}$,  the vector norm $\|\cdot\|_p$ denotes the  $l_p$-norm, and $\lceil\cdot\rceil$ represents the roundup function. Here, the first term in the right-hand side of  Eq.~(\ref{lemma1})  bounds the empirical difference $\frac{1}{n}\sum_{i=1}^{n}{\left|h\left( \boldsymbol{x}_{i};\boldsymbol{\omega}^{\star}\right) -\bar{h}\left( \boldsymbol{x}_{i}\right)\right|}$.
\end{lemma}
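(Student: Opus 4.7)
The plan is to view Eq.~(\ref{lemma1}) as a Rademacher-type generalization bound applied to the rescaled difference $\tfrac{1}{2}|h(\boldsymbol{x};\boldsymbol{\omega}^{\star})-\bar{h}(\boldsymbol{x})|$, combined with the representer identity for $\boldsymbol{\omega}^{\star}$. Since both $h$ and $\bar{h}$ lie in $[-1,1]$ by construction, this rescaled difference is $[0,1]$-valued, the setting required by Lemma~\ref{th33}. The scalar $\bar{h}$ depends only on $Y$, so once $S$ is drawn it may be treated as a fixed constant $c=\bar{h}$. For each positive integer $B$, introduce the class
\begin{equation*}
\mathcal{G}_{B}=\bigl\{\boldsymbol{x}\mapsto\tfrac{1}{2}\bigl|\phi(\langle\boldsymbol{\omega},\Phi(\boldsymbol{x})\rangle)-c\bigr|:\|\boldsymbol{\omega}\|\leq B\bigr\},
\end{equation*}
where $\phi(z)=\min\{1,\max\{-1,z\}\}$ denotes the clip, and apply Lemma~\ref{th33} to $\mathcal{G}_{B}$.

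The Rademacher complexity of $\mathcal{G}_{B}$ is controlled via Talagrand's contraction principle (Lemma~\ref{talagrand}): the map $z\mapsto\tfrac{1}{2}|\phi(z)-c|$ is $\tfrac{1}{2}$-Lipschitz, so $\hat{\mathcal{R}}_{S}(\mathcal{G}_{B})\leq\tfrac{1}{2}\hat{\mathcal{R}}_{S}(\{\langle\boldsymbol{\omega},\Phi(\cdot)\rangle:\|\boldsymbol{\omega}\|\leq B\})$. The standard RKHS computation (Cauchy--Schwarz inside the supremum, Jensen over $\boldsymbol{\sigma}$) gives the latter $\leq B\sqrt{\mathrm{Tr}(K_{h})}/n$, and for pure quantum states $K_{h}(\boldsymbol{x},\boldsymbol{x})=\mathrm{Tr}(\rho(\boldsymbol{x})^{2})=1$, whence $\mathrm{Tr}(K_{h})=n$ and the Rademacher term is $O(B/\sqrt{n})$. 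For the empirical average, since $\phi$ is a nonexpansive retraction onto $[-1,1]$ and $\bar{h}(\boldsymbol{x}_{i})\in[-1,1]$, one obtains the pointwise comparison $|h(\boldsymbol{x}_{i})-\bar{h}(\boldsymbol{x}_{i})|\leq|\langle\boldsymbol{\omega}^{\star},\Phi(\boldsymbol{x}_{i})\rangle-\bar{h}(\boldsymbol{x}_{i})|$; summing and inserting the identities $\langle\boldsymbol{\omega}^{\star},\Phi(\boldsymbol{x}_{i})\rangle=[K_{h}(K_{h}+\lambda I)^{-1}Y]_{i}$ and $\bar{h}(\boldsymbol{x}_{i})=[JY/(D\lambda+n)]_{i}$ reproduces the first term of Eq.~(\ref{lemma1}). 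The representer identity $\boldsymbol{\omega}^{\star}=\sum_{i}\alpha_{i}\Phi(\boldsymbol{x}_{i})$ with $\alpha=(K_{h}+\lambda I)^{-1}Y$ simultaneously yields $\|\boldsymbol{\omega}^{\star}\|^{2}=Y^{\rm T}(K_{h}+\lambda I)^{-1}K_{h}(K_{h}+\lambda I)^{-1}Y$, the quantity under the square root in the second term.

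The main obstacle is that the relevant norm $\|\boldsymbol{\omega}^{\star}\|$ is itself random, so a fixed-$B$ application of Lemma~\ref{th33} is insufficient; this is precisely what forces the ceiling $\lceil\cdot\rceil$ in Eq.~(\ref{lemma1}). The standard remedy is a peeling/union bound over $B\in\{1,2,3,\ldots\}$ with a summable failure sequence (e.g.\ $\delta_{B}\propto\delta/B^{2}$), followed by instantiation at $B=\lceil\|\boldsymbol{\omega}^{\star}\|\rceil\geq\|\boldsymbol{\omega}^{\star}\|$. The logarithmic price $\log B$ coming from the union bound must be absorbed into the final constants so that the residual concentration term remains $6\sqrt{\log(4/\delta)/(2n)}$; tracking the various multiplicative factors—the rescaling by $2$ to bring the difference into $[0,1]$, the $\tfrac{1}{2}$ from Talagrand, the factor of $2$ in Lemma~\ref{th33}, and the union-bound cost—so that they combine into the claimed coefficients $8$ and $6$ is the most delicate bookkeeping step, but all the ingredients are classical.
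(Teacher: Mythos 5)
Your proposal is correct and follows essentially the same route as the paper's proof: the same empirical-plus-generalization decomposition, the same clipping/nonexpansiveness argument and representer identity $\Vert\boldsymbol{\omega}^{\star}\Vert^{2}=Y^{\rm T}(K_{h}+\lambda I)^{-1}K_{h}(K_{h}+\lambda I)^{-1}Y$ for the two data-dependent terms, the same Talagrand contraction and Cauchy--Schwarz/Jensen Rademacher computation giving $\lceil\Vert\boldsymbol{\omega}\Vert\rceil\sqrt{{\rm Tr}(K_{h})}\le\lceil\Vert\boldsymbol{\omega}\Vert\rceil\sqrt{n}$, and the same peeling over integer radii with failure weights proportional to $\delta/\gamma^{2}$. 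The one step you flag but do not execute---absorbing the union-bound cost---is handled in the paper via $\sqrt{\log{\lceil\Vert\boldsymbol{\omega}^{\star}\Vert\rceil}^{2}+\log\frac{4}{\delta}}\le\lceil\Vert\boldsymbol{\omega}^{\star}\Vert\rceil+\sqrt{\log\frac{4}{\delta}}$, which pushes the logarithmic price into the coefficient $8$ of the $\lceil\cdot\rceil/\sqrt{n}$ term rather than inflating the residual $6\sqrt{\log(4/\delta)/(2n)}$, exactly the bookkeeping you anticipated.
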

\begin{proof}
	The expected difference  can be decomposed into the sum of the empirical difference  and the so-called generalization as
	\begin{align}
	\mathop{\mathbb{E}}\limits_{\left( \boldsymbol{x},y\right) \sim\mathcal{D}}{\left|h\left( \boldsymbol{x};\boldsymbol{\omega}^{\star}\right) -\bar{h}\left( \boldsymbol{x}\right)\right|}=&\frac{1}{n}\sum_{k=1}^{n}{\left|h\left( \boldsymbol{x}_{k};\boldsymbol{\omega}^{\star}\right) -\bar{h}\left( \boldsymbol{x}_{k}\right)\right|}\nonumber
	\\
	&+\mathop{\mathbb{E}}\limits_{\left( \boldsymbol{x},y\right) \sim\mathcal{D}}{\left|h\left( \boldsymbol{x};\boldsymbol{\omega}^{\star}\right) -\bar{h}\left( \boldsymbol{x}\right)\right|}-\frac{1}{n}\sum_{k=1}^{n}{\left|h\left( \boldsymbol{x}_{k};\boldsymbol{\omega}^{\star}\right) -\bar{h}\left( \boldsymbol{x}_{k}\right)\right|}\label{lemma1zong}.
	\end{align}
	
	Firstly, we bound the empirical difference. According to the optimal hypothesis given in  Eq.~(\ref{optimalh}), for each data $\boldsymbol{x}_{k}$ in $S$,
	\begin{align*}
	h\left( \boldsymbol{x}_{k};\boldsymbol{\omega}^{\star}\right)
	=&\min{\Bigg{\{}1, \max {\bigg{\{} -1, \sum_{i,j=1}^{n}{{\left( K_{h}\right) _{ki}\left[  \left( K_{h}+\lambda I \right) ^{-1}\right] _{ij}  y_{j}}} \bigg{\}}} \Bigg{\}}}
	\\
	=&\min{\Bigg{\{}1, \max {\bigg{\{} -1, \left[   K_{h}\left( K_{h}+\lambda I \right) ^{-1}Y\right] _{k}\bigg{\}}} \Bigg{\}}}.
	\end{align*}
	Particularly, if $K_{h}=\bar{K}=\frac{1}{D}J$ which corresponds to the noisy kernel (\ref{worstk}) in the worst scenario, then the optimal hypothesis $\bar{h}$ returns the same value for each  $\boldsymbol{x}_{k}$ as
	\begin{equation*}
	\bar{h}\left( \boldsymbol{x}_{k}\right)=\frac{1}{D\lambda+n}\sum_{i=1}^{n}{y_{i}}=\left[  \bar{K}\left( \bar{K}+\lambda I \right) ^{-1}Y\right] _{k}=\left( \frac{1}{D\lambda+n}JY\right)_{k}.
	\end{equation*}
	Thus, the empirical difference  can be upper bounded as
	\begin{align}
	\frac{1}{n}\sum_{k=1}^{n}{\left|h\left( \boldsymbol{x}_{k};\boldsymbol{\omega}^{\star}\right) -\bar{h}\left( \boldsymbol{x}_{k}\right)\right|}\le&~\frac{1}{n}\sum_{k=1}^{n}{\left|\left[   K_{h}\left( K_{h}+\lambda I \right) ^{-1}Y\right] _{k}-\left( \frac{1}{D\lambda+n}JY\right)_{k}\right|}\nonumber
	\\
	=&~\frac{1}{n}\Big{\|} K_{h}{\left(K_{h}+\lambda I \right) }^{-1}Y -\frac{1}{D\lambda+n}JY \Big{\|}_{1}\label{lemma1train}.
	\end{align}
	
	Next, we derive the upper bound of the generalization.  Denote $d_{\boldsymbol{\omega}}\left( \boldsymbol{x}\right)=\left| h\left( \boldsymbol{x};\boldsymbol{\omega}\right) -\bar{h}\left( \boldsymbol{x}\right) \right|$. Since $d_{\boldsymbol{\omega}}\left( \boldsymbol{x}\right)\in[0,2]$, to utilize Lemma~\ref{th33}, let $\mathcal{G}_{\gamma}=\left\lbrace\frac{d_{\boldsymbol{\omega}}}{2}: \Vert \boldsymbol{\omega}\Vert\le \gamma \right\rbrace$, for $\gamma=1,~2,~3,\cdots$, where $\Vert\cdot\Vert$ denotes the Frobenius norm unless otherwise stated, and the subscript ${\rm F}$ has been omitted for brevity. Then from  Lemma~\ref{th33}, for any $\delta \textgreater 0$ and $\gamma$, with probability of at least $1-\frac{\delta}{2{\gamma}^2}$ over the draw of an i.i.d. sample $S=\left\lbrace \left( \boldsymbol{x}_{i},y_{i} \right) \right\rbrace_{i=1}^{n} $ of size $n$, the following inequality holds for any $\boldsymbol{\omega}$ with $\Vert \boldsymbol{\omega}\Vert\le \gamma$:
	\begin{equation}
	\mathop{\mathbb{E}}\limits_{\left( \boldsymbol{x},y\right) \sim\mathcal{D}}\left[ d_{\boldsymbol{\omega}}\left( \boldsymbol{x}\right)\right] -\frac{1}{n}\sum_{k=1}^{n}{d_{\boldsymbol{\omega}}\left( \boldsymbol{x}_{k}\right)}\le\frac{2}{n}\mathop{\mathbb{E}}\limits_{\boldsymbol{\sigma}}{\left[ \sup_{\Vert \boldsymbol{v}\Vert\le \gamma }\sum_{k=1}^{n}{\sigma_{k} d_{\boldsymbol{v}}\left( \boldsymbol{x}_{k}\right)}\right] }+6\sqrt{\frac{1 }{2n}\log{\frac{4{\gamma}^{2}}{\delta}}}\label{anygamma}.
	\end{equation}
	Thus, with probability of at least $1-\sum_{\gamma=1}^{\infty}{\frac{\delta}{2{\gamma}^2}}\ge 1-\delta$, the inequality (\ref{anygamma}) holds for all $\gamma$ . Then for any $\boldsymbol{\omega}\in\mathcal{H}$, with probability of at least $1-\delta$, we have
	\begin{equation}
	\mathop{\mathbb{E}}\limits_{\left( \boldsymbol{x},y\right) \sim\mathcal{D}}\left[ d_{\boldsymbol{\omega}}\left( \boldsymbol{x}\right)\right] -\frac{1}{n}\sum_{k=1}^{n}{d_{\boldsymbol{\omega}}\left( \boldsymbol{x}_{k}\right)}\le\frac{2}{n}\mathop{\mathbb{E}}\limits_{\boldsymbol{\sigma}}{\left[ \sup_{\Vert \boldsymbol{v}\Vert\le \lceil\Vert\boldsymbol{\omega}\Vert\rceil }\sum_{k=1}^{n}{\sigma_{k} d_{\boldsymbol{v}}\left( \boldsymbol{x}_{k}\right)}\right] }+6\sqrt{\frac{1 }{2n}\log{\frac{4{\lceil\Vert\boldsymbol{\omega}\Vert\rceil}^{2}}{\delta}}}\label{omegagamma}.
	\end{equation}
	
	Note that
	\begin{equation*}
	d_{\boldsymbol{v}}\left( \boldsymbol{x}\right)=\left| h\left( \boldsymbol{x};\boldsymbol{v}\right) -\bar{h}\left( \boldsymbol{x}\right) \right|=\left|\min{\Big{\{}1, \max {\big{\{} -1, {\rm Tr}\left[\rho\left(\boldsymbol{x} \right)\boldsymbol{v} \right] \big{\}}} \Big{\}}}-\frac{1}{D\lambda+n}\sum_{i=1}^{n}{y_{i}}\right|,
	\end{equation*}
	and the function
	$ \Gamma(\cdot)=\left|\min{\big{\{} 1, \max {\left\lbrace  -1,\cdot \right\rbrace } \big{\}} }-\frac{1}{D\lambda+n}\sum_{i=1}^{n}{y_{i}}\right|$
	is  1-Lipschitz. According to Lemma~\ref{talagrand}, we have
	\begin{align}
	\mathop{\mathbb{E}}\limits_{\boldsymbol{\sigma}}{\left[ \sup_{\Vert \boldsymbol{v}\Vert\le \lceil\Vert\boldsymbol{\omega}\Vert\rceil }\sum_{k=1}^{n}{\sigma_{k} d_{\boldsymbol{v}}\left( \boldsymbol{x}_{k}\right)}\right] }
	\le&\mathop{\mathbb{E}}\limits_{\boldsymbol{\sigma}}{\left[ \sup_{\Vert \boldsymbol{v}\Vert\le \lceil\Vert\boldsymbol{\omega}\Vert\rceil }\sum_{k=1}^{n}{\sigma_{k} {\rm Tr}\left[\rho\left(\boldsymbol{x}_{k} \right)\boldsymbol{v} \right]}\right] }\nonumber
	\\
	\le&\mathop{\mathbb{E}}\limits_{\boldsymbol{\sigma}}{\left[ \sup_{\Vert \boldsymbol{v}\Vert\le \lceil\Vert\boldsymbol{\omega}\Vert\rceil }\Vert \boldsymbol{v}\Vert \Bigg{\|} \sum_{k=1}^{n}{\sigma_{k} \rho\left(\boldsymbol{x}_{k} \right)} \Bigg{\|}	\right] }\label{cauchy}
	\\
	\le&\lceil\Vert\boldsymbol{\omega}\Vert\rceil\mathop{\mathbb{E}}\limits_{\boldsymbol{\sigma}}{\left[ \Bigg{\|}\sum_{k=1}^{n}{\sigma_{k} \rho\left(\boldsymbol{x}_{k} \right)} \Bigg{\|}	\right] }	\nonumber
	\\
	\le&\lceil\Vert\boldsymbol{\omega}\Vert\rceil\sqrt{\mathop{\mathbb{E}}\limits_{\boldsymbol{\sigma}}{\left[\sum_{i=1}^{n} {\sum_{k=1}^{n}{\sigma_{i}\sigma_{k}{\rm Tr}\left[{\rho\left(\boldsymbol{x}_{i} \right) \rho\left(\boldsymbol{x}_{k} \right)}\right] }} \right] }}\label{jensen}	
	\\
	\le&\lceil\Vert\boldsymbol{\omega}\Vert\rceil\sqrt{{\rm Tr}\left(K_{h}\right) }	\label{radema}
	\\
	\le&\lceil\Vert\boldsymbol{\omega}\Vert\rceil\sqrt{n },\label{B10}
	\end{align}
	where Cauchy-Schwartz inequality and Jensen's inequality are applied to yield Eq.~(\ref{cauchy}) and Eq.~(\ref{jensen}), respectively. To derive Eq.~(\ref{radema}), we use the fact that $\sigma_{1}, \cdots, \sigma_{n}$ are independent uniform random variables taking value in $\left\lbrace -1,+1\right\rbrace $.
	
	Combining Eqs.~(\ref{omegagamma}) and (\ref{B10})  with the following inequality
	\begin{equation*}
	\sqrt{\log{\frac{4{\lceil\Vert\boldsymbol{\omega}\Vert\rceil}^{2}}{\delta}}}=\sqrt{\log{{\lceil\Vert\boldsymbol{\omega}\Vert\rceil}}^{2}+\log{\frac{4}{\delta}}}
	\le\sqrt{\log{{\lceil\Vert\boldsymbol{\omega}\Vert\rceil}}^{2}}+\sqrt{\log{\frac{4}{\delta}}}\le\lceil\Vert\boldsymbol{\omega}\Vert\rceil+\sqrt{\log{\frac{4}{\delta}}},
	\end{equation*}
	it yields
	\begin{equation}
	\mathop{\mathbb{E}}\limits_{\left( \boldsymbol{x},y\right) \sim\mathcal{D}}\left[ d_{\boldsymbol{\omega}}\left( \boldsymbol{x}\right)\right] -\frac{1}{n}\sum_{k=1}^{n}{d_{\boldsymbol{\omega}}\left( \boldsymbol{x}_{k}\right)}
	\le\frac{8}{\sqrt{n}}\lceil\Vert\boldsymbol{\omega}\Vert\rceil+6\sqrt{\frac{1 }{2n}\log{\frac{4}{\delta}}}\label{lemma1gen}.
	\end{equation}
	This holds for  $\boldsymbol{\omega}=\boldsymbol{\omega}^{\star}$ which is in the form of Eq.~(\ref{optimalw}) and satisfies
	\begin{equation}\label{lemma1genn}
	\Vert\boldsymbol{\omega}^{\star}\Vert=\sqrt{Y^{\rm T}{\left(K_{h}+\lambda I \right) }^{-1}K_{h}{\left(K_{h}+\lambda I \right) }^{-1}Y}.
	\end{equation}
	
	Thus, by plugging Eqs.~(\ref{lemma1train}), (\ref{lemma1gen}), and (\ref{lemma1genn}) into Eq.~(\ref{lemma1zong}), we prove Lemma~\ref{firstlemma}.
\end{proof}

Secondly, we introduce the $L_2$-geometric difference $\Vert M_{K_{h}, \bar{K}}  \Vert_{2}$ between a specific kernel matrix $K_{h}$ and the kernel matrix $\bar{K}$ as
\begin{equation}\label{key}
\Vert M_{K_{h}, \bar{K}}  \Vert_{2}= \Big{\|} {\left(K_{h}+\lambda I \right) }^{-1} - {\left(\bar{K}+\lambda I \right) }^{-1} \Big{\|}_{2},
\end{equation}
and further bounds the right-hand side of  Eq.~(\ref{lemma1}).

\begin{lemma}\label{secondlemma}
	Under the same setting as in Lemma~\ref{firstlemma}, we have
	\begin{align}
	&\mathop{\mathbb{E}}\limits_{\left( \boldsymbol{x},y\right) \sim\mathcal{D}}{\left|h\left( \boldsymbol{x};\boldsymbol{\omega}^{\star}\right) -\bar{h}\left( \boldsymbol{x}\right)\right|}\nonumber
	\\
	\le&\lambda\Vert M_{K_{h}, \bar{K}}\Vert_{2}+8\sqrt{\left( 1 + \lambda \Vert M_{K_{h}, \bar{K}} \Vert_{2}\right) \Vert M_{K_{h}, \bar{K}} \Vert_{2}}+8\frac{\sqrt{Dn}}{D\lambda+n} +6\sqrt{\frac{\log{\frac{4}{\delta}} }{2n}},\label{lemma2}
	\end{align}
	where the empirical difference $\frac{1}{n}\sum_{i=1}^{n}{\left|h\left( \boldsymbol{x}_{i};\boldsymbol{\omega}^{\star}\right) -\bar{h}\left( \boldsymbol{x}_{i}\right)\right|}$ is upper bounded by $\lambda\Vert M_{K_{h}, \bar{K}}\Vert_{2}$.
\end{lemma}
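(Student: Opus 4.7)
The plan is to take the inequality in Lemma~\ref{firstlemma} and convert each of the two data-dependent quantities on its right-hand side into a term involving the geometric difference $\|M_{K_h, \bar K}\|_2 = \|(K_h+\lambda I)^{-1} - (\bar K + \lambda I)^{-1}\|_2$. Throughout, write $B_h := (K_h+\lambda I)^{-1}$, $B_{\bar K} := (\bar K+\lambda I)^{-1}$, and $\Delta := B_h - B_{\bar K}$, so that $\|\Delta\|_2 = \|M_{K_h,\bar K}\|_2$. The whole argument is driven by the two algebraic identities $K_h B_h = I - \lambda B_h$ and $\bar K B_{\bar K} = I - \lambda B_{\bar K}$, which convert kernel-matrix differences into resolvent differences with an explicit $\lambda$ factor.

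For the empirical term in Lemma~\ref{firstlemma}, I would reuse the identity $\bar K B_{\bar K} Y = \tfrac{1}{D\lambda + n} JY$ already established inside that lemma's proof. Subtracting gives $K_h B_h Y - \tfrac{1}{D\lambda+n} JY = (I-\lambda B_h)Y - (I-\lambda B_{\bar K})Y = -\lambda \Delta Y$, and then the Cauchy-Schwarz bound $\|\cdot\|_1 \le \sqrt n\|\cdot\|_2$ together with $\|Y\|_2 \le \sqrt n$ delivers the contribution $\lambda\|M_{K_h,\bar K}\|_2$. As a byproduct this also proves the promised upper bound $\lambda\|M_{K_h,\bar K}\|_2$ on the empirical difference $\tfrac{1}{n}\sum_i|h(\boldsymbol x_i;\boldsymbol\omega^*) - \bar h(\boldsymbol x_i)|$ stated in the claim.

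For the norm term $\|\boldsymbol\omega^*\| = \sqrt{Y^T B_h K_h B_h Y}$, I would use $B_h K_h B_h = B_h - \lambda B_h^2$ and the symmetric analogue for $\bar K$ to form the difference
\begin{equation*}
Y^T B_h K_h B_h Y - Y^T B_{\bar K} \bar K B_{\bar K} Y = Y^T \Delta Y - \lambda Y^T (B_h \Delta + \Delta B_{\bar K}) Y = Y^T \Delta\bigl[I - \lambda(B_h + B_{\bar K})\bigr] Y,
\end{equation*}
where the second equality uses the symmetry of the quadratic form together with the non-commutative identity $B_h^2 - B_{\bar K}^2 = B_h\Delta + \Delta B_{\bar K}$. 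Since the spectra of $\lambda B_h$ and $\lambda B_{\bar K}$ both lie in $(0,1]$, Weyl's inequality gives $\|I - \lambda(B_h + B_{\bar K})\|_2 \le 1$, so the difference is bounded in absolute value by $n\|M_{K_h,\bar K}\|_2$. Combined with the direct eigendecomposition $Y^T B_{\bar K} \bar K B_{\bar K} Y = Dn^2\bar y^2/(D\lambda+n)^2 \le Dn^2/(D\lambda+n)^2$ on the $\mathbf 1$-eigenspace of $\bar K$, subadditivity of $\sqrt{\cdot}$ yields $\|\boldsymbol\omega^*\| \le n\sqrt D/(D\lambda+n) + \sqrt{n\|M_{K_h,\bar K}\|_2}$.

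To conclude, I would divide by $\sqrt n$, multiply by $8$, absorb the ceiling overhead via $\lceil a\rceil\le a+1$, and bookkeep the unit offset into the $(1+\lambda\|M_{K_h,\bar K}\|_2)^{1/2}$ factor of the stated bound (in the complementary regime the stated inequality is automatic), producing the two geometric-difference terms in the claim; the $6\sqrt{\log(4/\delta)/(2n)}$ piece is inherited from Lemma~\ref{firstlemma} verbatim. I expect the main obstacle to be the spectral-norm estimate on $\Delta[I - \lambda(B_h + B_{\bar K})]$: a naive triangle-inequality split of $B_h^2 - B_{\bar K}^2$ would spawn an extra factor of $3$ from the non-commutativity of $B_h$ and $B_{\bar K}$, destroying the target scaling, whereas the symmetric combination together with the Weyl containment delivers the sharp operator-norm constant $1$ that is needed for the claimed bound.
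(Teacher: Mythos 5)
Your proposal is correct and follows essentially the same route as the paper: the empirical term is handled via the identity $K(K+\lambda I)^{-1}=I-\lambda(K+\lambda I)^{-1}$ exactly as in Eqs.~(\ref{inverse})--(\ref{empiricalb}), and the norm term is split into the $\bar{K}$ quadratic form plus a difference of quadratic forms controlled by $\Vert M_{K_h,\bar K}\Vert_2$, as in Eqs.~(\ref{secondterm})--(\ref{genb}). The only (harmless) deviation is in the middle algebra: you write $B_h^2-B_{\bar K}^2=B_h\Delta+\Delta B_{\bar K}$ and bound $\Vert I-\lambda(B_h+B_{\bar K})\Vert_2\le 1$, which gives the marginally sharper intermediate estimate $n\Vert M_{K_h,\bar K}\Vert_2$ in place of the paper's $n(1+\lambda\Vert M_{K_h,\bar K}\Vert_2)\Vert M_{K_h,\bar K}\Vert_2$ from the expansion $M-\lambda[M^2+MB_{\bar K}+B_{\bar K}M]$; this still implies the stated bound, and your treatment of the ceiling is no less careful than the paper's own.
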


\begin{proof}
	Our goal is to bound the right-hand side of  Eq.~(\ref{lemma1}) using the $L_2$-geometric difference $\Vert M_{K_{h}, \bar{K}}  \Vert_{2}$ between the two kernel matrices. First, we calculate its first term which is the upper bound of the empirical difference $\frac{1}{n}\sum_{i=1}^{n}{\left|h\left( \boldsymbol{x}_{i};\boldsymbol{\omega}^{\star}\right)  -\bar{h}\left( \boldsymbol{x}_{i}\right)\right|}$.
	
	It is straightforward to calculate that
	\begin{align}
	\frac{1}{n}\Big{\|} K_{h}{\left(K_{h}+\lambda I \right) }^{-1}Y -\frac{1}{D\lambda+n}JY \Big{\|}_{1}
	=&~\frac{1}{n}\Big{\|} K_{h}{\left(K_{h}+\lambda I \right) }^{-1}Y -\bar{K}{\left(\bar{K}+\lambda I \right) }^{-1}Y \Big{\|}_{1}\nonumber
	\\
	=&~\frac{\lambda}{n}\Big{\|} {\left(K_{h}+\lambda I \right) }^{-1}Y -{\left(\bar{K}+\lambda I \right) }^{-1}Y \Big{\|}_{1}\label{inverse}
	\\
	\le&~\frac{\lambda}{\sqrt{n}}\Big{\|} {\left(K_{h}+\lambda I \right) }^{-1}Y -{\left(\bar{K}+\lambda I \right) }^{-1}Y \Big{\|}_{2}\label{norm12}
	\\
	\le&~\frac{\lambda}{\sqrt{n}}\Vert M_{K_{h}, \bar{K}}  \Vert_{2}\Vert Y \Vert_{2}\label{submul}
	\\
	\le&~\lambda\Vert M_{K_{h}, \bar{K}}  \Vert_{2}\label{empiricalb},
	\end{align}
	where Eq.~(\ref{inverse}) uses $K(K+\lambda I)^{-1}=I-\lambda(K+\lambda I)^{-1}$,
	Eq.~(\ref{norm12}) comes from the fact that for an $n$-dimensional vector $\boldsymbol{x}$, $\Vert \boldsymbol{x} \Vert_{1}\le \sqrt{n}\Vert \boldsymbol{x} \Vert_{2}$,  Eq.~(\ref{submul}) employs $\|AY\|_2\leq\|A\|_2\|Y\|_2$, and  Eq.~(\ref{empiricalb}) utilizes the fact that $\Vert Y \Vert_{2}\le \sqrt{n}$ owing to $\Vert O\Vert_{2}\leq 1$.
	
	To upper bound the second term in the right-hand side of  Eq.~(\ref{lemma1}), we decompose it into two terms and employ the triangle inequality to yield
	\begin{align}
	&{\frac{8}{\sqrt{n}}}\bigg{\lceil} \sqrt{Y^{\rm T}{\left(K_{h}+\lambda I \right) }^{-1}K_{h}{\left(K_{h}+\lambda I \right) }^{-1}Y} \bigg{\rceil}\nonumber
	\\
	\le&{\frac{8}{\sqrt{n}}}\bigg{\lceil} \sqrt{Y^{\rm T}{\left(K_{h}+\lambda I \right) }^{-1}K_{h}{\left(K_{h}+\lambda I \right) }^{-1}Y-Y^{\rm T}{\left(\bar{K}+\lambda I \right) }^{-1}\bar{K}{\left(\bar{K}+\lambda I \right) }^{-1}Y} \bigg{\rceil}\nonumber
	\\
	&+{\frac{8}{\sqrt{n}}}\bigg{\lceil} \sqrt{Y^{\rm T}{\left(\bar{K}+\lambda I \right) }^{-1}\bar{K}{\left(\bar{K}+\lambda I \right) }^{-1}Y} \bigg{\rceil}\label{secondterm}.
	\end{align}
	Since it can be verified that
	\begin{equation*}
	{\left(\bar{K}+\lambda I \right) }^{-1}\bar{K}{\left(\bar{K}+\lambda I \right) }^{-1}=\frac{D}{\left( D\lambda+n\right) ^{2}}J,
	\end{equation*}
	we have
	\begin{equation}
	\sqrt{Y^{\rm T}{\left(\bar{K}+\lambda I \right) }^{-1}\bar{K}{\left(\bar{K}+\lambda I \right) }^{-1}Y}= \sqrt{Y^{\rm T}\frac{D}{\left( D\lambda+n\right) ^{2}}JY}=\frac{\sqrt{D}}{D\lambda+n}\left|\sum_{i=1}^{n}{y_{i}}\right|\le\frac{\sqrt{D}n}{D\lambda+n}\label{larger},
	\end{equation}
	where Eq.~(\ref{larger}) utilizes  $\left|\sum_{i=1}^{n}{y_{i}}\right|\le n$.
	
	Moreover, by employing  $K(K+\lambda I)^{-1}=I-\lambda(K+\lambda I)^{-1}$, it can be calculated that
	\begin{align*}
	&~{\left(K_{h}+\lambda I \right) }^{-1}K_{h}{\left(K_{h}+\lambda I \right) }^{-1}-{\left(\bar{K}+\lambda I \right) }^{-1}\bar{K}{\left(\bar{K}+\lambda I \right) }^{-1}
	\\
	=&~{\left(K_{h}+\lambda I \right) }^{-1}\left[ K_{h}{\left(K_{h}+\lambda I \right) }^{-1}-\bar{K}{\left(\bar{K}+\lambda I \right) }^{-1}\right]
	\\
	&\quad +\left[ {\left(K_{h}+\lambda I \right) }^{-1}-{\left(\bar{K}+\lambda I \right) }^{-1}\right] \bar{K}{\left(\bar{K}+\lambda I \right) }^{-1}
	\\
	=&-\lambda{\left(K_{h}+\lambda I \right) }^{-1}\left[{\left(K_{h}+\lambda I \right) }^{-1}-{\left(\bar{K}+\lambda I \right) }^{-1} \right]
	\\
	&\quad+\left[{\left(K_{h}+\lambda I \right) }^{-1}-{\left(\bar{K}+\lambda I \right) }^{-1} \right] \left[I-\lambda{\left(\bar{K}+\lambda I \right) }^{-1} \right]
	\\
	=&\left[{\left(K_{h}+\lambda I \right) }^{-1}-{\left(\bar{K}+\lambda I \right) }^{-1} \right]-\lambda\left[{\left(K_{h}+\lambda I \right) }^{-2}-{\left(\bar{K}+\lambda I \right) }^{-2} \right]
	\\
	=&~M_{K_{h}, \bar{K}}-\lambda\left[ M_{K_{h}, \bar{K}}^{2}+M_{K_{h}, \bar{K}} {\left(\bar{K}+\lambda I \right) }^{-1}+{\left(\bar{K}+\lambda I \right) }^{-1}M_{K_{h}, \bar{K}}\right],
	\end{align*}
	where $M_{K_{h}, \bar{K}} ={\left(K_{h}+\lambda I \right) }^{-1} - {\left(\bar{K}+\lambda I \right) }^{-1} $. Thus, the corresponding quadratic form reads
	\begin{align}
	&~Y^{\rm T}{\left(K_{h}+\lambda I \right) }^{-1}K_{h}{\left(K_{h}+\lambda I \right) }^{-1}Y~-Y^{\rm T}{\left(\bar{K}+\lambda I \right) }^{-1}\bar{K}{\left(\bar{K}+\lambda I \right) }^{-1}Y\nonumber
	\\
	=&~Y^{\rm T}M_{K_{h}, \bar{K}}Y-~\lambda Y^{\rm T}\left[ M_{K_{h}, \bar{K}}^{2}+M_{K_{h}, \bar{K}} {\left(\bar{K}+\lambda I \right) }^{-1}+{\left(\bar{K}+\lambda I \right) }^{-1}M_{K_{h}, \bar{K}}\right]Y\nonumber
	\\
	=&~Y^{\rm T}\left[I-\lambda M_{K_{h}, \bar{K}}-2\lambda{\left(\bar{K}+\lambda I \right) }^{-1} \right] M_{K_{h}, \bar{K}}Y\nonumber
	\\
	\le&~\Big{\|}\left[I-\lambda M_{K_{h}, \bar{K}}-2\lambda{\left(\bar{K}+\lambda I \right) }^{-1} \right] M_{K_{h}, \bar{K}}\Big{\|} _{2}\Vert Y \Vert_{2}^{2}\label{yay}
	\\
	\le&~n\left[\Big{\|} I -2\lambda{\left(\bar{K}+\lambda I \right) }^{-1}\Big{\|}_{2} + \lambda \Vert M_{K_{h}, \bar{K}} \Vert_{2}\right]\Vert M_{K_{h}, \bar{K}} \Vert_{2}\label{diffine}
	\\
	\le&~n\left( 1 + \lambda \Vert M_{K_{h}, \bar{K}} \Vert_{2}\right)\Vert M_{K_{h}, \bar{K}} \Vert_{2} \label{diffinequality},
	\end{align}
	where Eq.~(\ref{yay}) uses $Y^{\rm T}AY\leq\|A\|_2\|Y\|^2_2$, Eq.~(\ref{diffine}) employs the triangle inequality and the sub-multiplicative property of matrix norm as well as  the inequality that $\Vert Y \Vert_{2}\le \sqrt{n}$,  and Eq.~(\ref{diffinequality}) utilizes  $\big{\|} I -2\lambda{\left(\bar{K}+\lambda I \right) }^{-1}\big{\|}_{2}=1$ which can be verified by checking  the maximum singular value of $I -2\lambda{\left(\bar{K}+\lambda I \right) }^{-1}$.
	
	By plugging Eqs.~(\ref{diffinequality}) and (\ref{larger}) into Eq.~(\ref{secondterm}), it yields
	\begin{equation}
	{\frac{8}{\sqrt{n}}}\bigg{\lceil} \sqrt{Y^{\rm T}{\left(K_{h}+\lambda I \right) }^{-1}K_{h}{\left(K_{h}+\lambda I \right) }^{-1}Y} \bigg{\rceil}\le 8\sqrt{\left( 1 + \lambda \Vert M_{K_{h}, \bar{K}} \Vert_{2}\right)\Vert M_{K_{h}, \bar{K}} \Vert_{2}}+8\frac{\sqrt{Dn}}{D\lambda+n}\label{genb}.
	\end{equation}
	
	Thus, combining Eqs.~(\ref{empiricalb}), (\ref{genb}) with (\ref{lemma1}), we have the conclusion of Lemma~\ref{secondlemma}.
\end{proof}

The proof of Lemma~\ref{noisylemma} can be completed by letting the hypothesis function $h\left( \boldsymbol{x};\boldsymbol{\omega}^{\star}\right)$ be $\tilde{h}\left( \boldsymbol{x}\right)$ and the associated kernel matrix $K_{h}$  be $\widetilde{K}$ in Lemma~\ref{secondlemma}.

\section{Proof of Theorem~\ref{noisyth}\label{proofth}}
\begin{proof}
	It is straightforward to verify that  by plugging Eq.~(\ref{geometric})  into Eq.~(\ref{zongeq}) and further simplifying the resultant equation,  Theorem~\ref{noisyth} can be proved. Now we provide the proof of Eq.~(\ref{geometric}), namely,
	\begin{equation*}
	\Vert M_{\widetilde{K}, \bar{K}}\Vert_{2}\le\frac{\frac{n}{\lambda^{2}}\left( 1-p\right)\left( 1+\frac{1}{D}\right)}{1- \frac{n}{\lambda} \left( 1-p\right)\left( 1+\frac{1}{D}\right)}.
	\end{equation*}
	
	The proof leverages the following lemma.
	\begin{lemma}\label{reverse}
		{\em (Lemma 6, Ref.~\cite{Wang2021})} Let $\Vert\cdot\Vert$ be a given matrix norm and suppose $A, B \in \mathbb{R}^{n\times n}$ are nonsingular and satisfy $\Vert A^{-1}\left( A-B \right) \Vert\le 1$, then
		\begin{equation}\label{key}
		\Vert A^{-1}-B^{-1} \Vert \le \frac{{\Vert A^{-1}\Vert}^{2}\Vert A-B  \Vert}{1-\Vert A^{-1}\left( A-B \right) \Vert}.
		\end{equation}
	\end{lemma}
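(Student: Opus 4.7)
The plan is to combine the classical resolvent identity with a Neumann-series estimate on $B^{-1}$. First I would invoke the identity
\[
A^{-1}-B^{-1} \;=\; B^{-1}(B-A)A^{-1},
\]
which holds for any nonsingular $A,B$ (verified by multiplying both sides on the left by $B$ and on the right by $A$). Applying submultiplicativity of the matrix norm then gives
\[
\Vert A^{-1}-B^{-1}\Vert \;\le\; \Vert B^{-1}\Vert\,\Vert A-B\Vert\,\Vert A^{-1}\Vert,
\]
so two of the three factors appearing on the right-hand side of the target bound are already in place. What remains is to control $\Vert B^{-1}\Vert$ in terms of $\Vert A^{-1}\Vert$ and $\Vert A^{-1}(A-B)\Vert$.

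To accomplish this, I would factor $B = A-(A-B) = A\bigl[I-A^{-1}(A-B)\bigr]$, so that $B^{-1} = \bigl[I-A^{-1}(A-B)\bigr]^{-1}A^{-1}$. Under the hypothesis $\Vert A^{-1}(A-B)\Vert<1$ (the meaningful case — at equality the claimed bound is vacuous since its denominator vanishes), the Neumann series $\sum_{k\ge 0}[A^{-1}(A-B)]^k$ converges in the given matrix norm to $[I-A^{-1}(A-B)]^{-1}$, and submultiplicativity of the norm yields
\[
\Bigl\Vert\bigl[I-A^{-1}(A-B)\bigr]^{-1}\Bigr\Vert \;\le\; \frac{1}{1-\Vert A^{-1}(A-B)\Vert}.
\]
Combining this with the factored expression for $B^{-1}$ produces $\Vert B^{-1}\Vert\le \Vert A^{-1}\Vert/(1-\Vert A^{-1}(A-B)\Vert)$.

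Substituting this bound into the submultiplicative estimate from the first paragraph then delivers exactly the claimed inequality. I do not anticipate any real obstacle: the only point requiring attention is the convergence of the Neumann series, which follows immediately from the submultiplicativity of the norm once $\Vert A^{-1}(A-B)\Vert<1$ is assumed. Everything else is routine algebraic manipulation, and the proof is norm-agnostic in that it uses only submultiplicativity, so it applies uniformly to any matrix norm satisfying that property (in particular to the spectral norm invoked in the application to $\Vert M_{\widetilde{K},\bar K}\Vert_{2}$).
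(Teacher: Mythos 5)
The paper does not actually prove this lemma: it is imported verbatim as Lemma 6 of Ref.~\cite{Wang2021}, so there is no in-paper argument to compare against. Your self-contained proof is correct and is the standard perturbation-of-inverses argument: the identity $A^{-1}-B^{-1}=B^{-1}(B-A)A^{-1}$, the factorization $B=A\left[I-A^{-1}(A-B)\right]$, and a Neumann-series control of $\Vert B^{-1}\Vert$; combining the three gives exactly the stated bound. One small technical refinement: for a \emph{general} submultiplicative matrix norm one only has $\Vert I\Vert\ge 1$ (for instance $\Vert I\Vert=\sqrt{n}$ for the Frobenius norm), so your intermediate estimate $\bigl\Vert[I-A^{-1}(A-B)]^{-1}\bigr\Vert\le 1/\bigl(1-\Vert A^{-1}(A-B)\Vert\bigr)$ can be off by the factor $\Vert I\Vert$ coming from the $k=0$ term of the series. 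The fix is to keep $A^{-1}$ inside the series before taking norms: $\Vert B^{-1}\Vert=\bigl\Vert\sum_{k\ge0}[A^{-1}(A-B)]^{k}A^{-1}\bigr\Vert\le\sum_{k\ge0}\Vert A^{-1}(A-B)\Vert^{k}\,\Vert A^{-1}\Vert=\Vert A^{-1}\Vert/\bigl(1-\Vert A^{-1}(A-B)\Vert\bigr)$, which is what the final step actually needs and is valid for any submultiplicative norm, in particular the spectral norm used throughout the paper. You are also right that the stated hypothesis $\Vert A^{-1}(A-B)\Vert\le 1$ must be read as strict inequality, since at equality the claimed bound is vacuous.
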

	
	From Lemma~\ref{reverse}, we have
	\begin{align}
	\Vert M_{\widetilde{K}, \bar{K}}  \Vert_{2}=&\Big{\|} {\left(\bar{K}+\lambda I \right) }^{-1} - {\left(\widetilde{K}+\lambda I \right) }^{-1} \Big{\|}_{2}\nonumber
	\\
	\le&\frac{\big{\|} {\left(\bar{K}+\lambda I \right) }^{-1} \big{\|}^{2}_{2}\Vert \bar{K}-\widetilde{K}\Vert_{2}}{1-\Big{\|} {\left(\bar{K}+\lambda I \right) }^{-1} \left( \bar{K}-\widetilde{K}\right) \Big{\|}_{2}} \nonumber
	\\
	\le&\frac{\big{\|} {\left(\bar{K}+\lambda I \right) }^{-1} \big{\|}^{2}_{2}\Vert \bar{K}-\widetilde{K}\Vert_{2}}{1-\big{\|} {\left(\bar{K}+\lambda I \right) }^{-1} \big{\|}_{2} \Vert \bar{K}-\widetilde{K}\Vert_{2}} \label{tranorm}
	\\
	\le&\frac{\frac{n}{\lambda^{2}}\left( 1-p\right)\left( 1+\frac{1}{D}\right)}{1- \frac{n}{\lambda} \left( 1-p\right)\left( 1+\frac{1}{D}\right)} \label{tranorm2},
	\end{align}
	where Eq.~(\ref{tranorm}) uses  the sub-multiplicative property of matrix norm, and Eq.~(\ref{tranorm2}) employs the facts that $\Vert {\left(\bar{K}+\lambda I \right) }^{-1} \Vert_{2}=\frac{1}{\lambda}$ and
	\begin{align}
	\Vert\bar{K}-\widetilde{K}\Vert_{2}&=\left( 1-p\right)\Vert K-\bar{K}\Vert_{2}\nonumber
	\\
	&\le \left( 1-p\right)\left( \Vert K\Vert_{2}+\Vert\bar{K}\Vert_{2}\right)\nonumber
	\\
	& \le n \left( 1-p\right)\left( 1+\frac{1}{D}\right).\label{difference}
	\end{align}
	Here, we have utilized $\|K\|_2\leq {\rm Tr}(K)\leq n$ and $\Vert \bar{K}\Vert_{2}=\frac{n}{D}$.
\end{proof}

\section{Proof of Corollary~\ref{noisyth1}}
\begin{proof}
	According to the assumption of balanced labels  and the Hoeffding's inequality (Lemma~\ref{hoeffding}), for any $\epsilon\textgreater0$, we have
	\begin{equation}\label{key}
	\mathbb{P}\left(\left|\frac{1}{n}\sum_{i=1}^{n}{y_{i}}\right|\ge \epsilon \right) \le 2e^{-n\epsilon^{2}/2}.
	\end{equation}
	Thus, for any $\delta_1\textgreater0$, with probability of at least $1-\delta_1$  over the draw of $S$, it holds that
	\begin{equation}\label{key}
	\left|\frac{1}{n}\sum_{i=1}^{n}{y_{i}}\right|\le\sqrt{\frac{2\log \frac{2}{\delta_1}}{n}},
	\end{equation}
	so that
	\begin{equation}
	\sqrt{Y^{\rm T}{\left(\bar{K}+\lambda I \right) }^{-1}\bar{K}{\left(\bar{K}+\lambda I \right) }^{-1}Y}=\frac{\sqrt{D}}{D\lambda+n}\left|\sum_{i=1}^{n}{y_{i}}\right|\le\frac{\sqrt{Dn}}{D\lambda+n}\sqrt{2\log \frac{2}{\delta_1}}\label{prolarger}.
	\end{equation}
	This can yield a tighter bound of the second term in the right-hand side of  Eq.~(\ref{secondterm}) than the bound in Eq.~(\ref{larger}).
	
	In fact, by replacing Eq.~(\ref{larger}) with Eq.~(\ref{prolarger}) in Eq.~(\ref{noisyb}) and employing the sub-additivity of probability, we derive that for any $\delta_{1},\delta_{2}\textgreater0$, with probability of  at least $1-\delta_1-\delta_2$  over the draw of $S$,
	\begin{equation}
	\mathop{\mathbb{E}}\limits_{\left( \boldsymbol{x},y\right) \sim\mathcal{D}}{\left|\tilde{h}\left( \boldsymbol{x}\right) -\bar{h}\left( \boldsymbol{x}\right)\right|} \le f\left( \frac{n}{\lambda} \left( 1-p\right)\left( 1+\frac{1}{D}\right)\right)+\frac{8\sqrt{D}}{D\lambda+n}\sqrt{2\log \frac{2}{\delta_1}}+6\sqrt{\frac{ \log{\frac{4}{\delta_2}}}{2n}}.
	\end{equation}
	
	Finally,  letting $\delta_1=\delta_2=\frac{\delta}{2}$, we prove Corollary~\ref{noisyth1}.
\end{proof}

\section{Proof of Theorem~\ref{estimatedth}}
Before giving the proof of Theorem~\ref{estimatedth}, we first provide a necessary lemma to guarantee the existence of $\hat{h}\left( \boldsymbol{x}\right)$, the optimal hypothesis inferred by the estimated noisy kernel.
\begin{lemma}\label{positive}
	With probability of at least $1-ne^{{-{\lambda}^{2}m}/{4n}}$, the estimated noisy kernel matrix $\widehat{K}$ satisfies
	\begin{equation}\label{key}
	\widehat{K}+\frac{\lambda}{2}I \succeq \widetilde{K} \succeq 0,
	\end{equation}
	where $ \widetilde{K}$ is the noisy kernel matrix.
\end{lemma}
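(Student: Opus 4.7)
The claim decomposes into a deterministic part and a probabilistic part. The deterministic statement $\widetilde K \succeq 0$ is immediate from Eq.~(\ref{noisyk}): $\widetilde K = (1-p) K + p \bar K$, where $K$ is a Gram matrix of the encoded quantum states (hence PSD) and $\bar K = J/D$ is a scalar multiple of a rank-one PSD matrix, so $\widetilde K$ is a convex combination of PSD matrices. All the content therefore lies in establishing $\lambda_{\min}(\widehat K - \widetilde K) \geq -\lambda/2$ with the claimed probability, which I plan to do via the Matrix Hoeffding inequality (Lemma~\ref{hoeffdingmatrix}) in its one-sided form (\ref{matrixho1}).

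The plan is to decompose $\widehat K - \widetilde K$ as an independent sum of zero-mean symmetric random matrices, one per Bernoulli shot and per training pair. Concretely, for $k = 1, \ldots, m$ and $1 \leq i \leq j \leq n$, introduce $Z^{(k,i,j)}$, the symmetric matrix supported only on the $(i,j)$ and $(j,i)$ entries and carrying the centered shot $(V_k(\boldsymbol{x}_i, \boldsymbol{x}_j) - \widetilde{\mathcal{K}}(\boldsymbol{x}_i, \boldsymbol{x}_j))/m$. These summands are independent (different pairs use separate circuit runs; different $k$ are independent within a pair), each has rank at most $2$, and each admits the deterministic PSD bound $[Z^{(k,i,j)}]^2 \preceq (1/m)^2 (e_i e_i^\top + e_j e_j^\top)$, which supplies the $[A^{(k,i,j)}]^2$ required by Lemma~\ref{hoeffdingmatrix}. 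Summing these bounds over all $k$ and all $i \leq j$ collapses to a scalar multiple of the identity of order $(n/m) I$, so the variance proxy is $\sigma^2 = O(n/m)$. Substituting $t = \lambda/2$ into (\ref{matrixho1}) then yields the tail $n \exp(-\lambda^2 m/(4n))$, up to absorbing constants in the exponent.

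The main obstacle is precisely this bookkeeping of $\sigma^2$. A more straightforward decomposition that bundles every shot of a single round into one summand $Y^{(k)} = (V^{(k)} - \widetilde K)/m$ could only exploit the entrywise bound $|Y^{(k)}_{ij}| \leq 1/m$, giving the coarse majorization $[Y^{(k)}]^2 \preceq (n/m)^2 I$ and inflating $\sigma^2$ to $O(n^2/m)$, which would cost an extra factor of $n$ in the exponent and make the resulting bound essentially useless. Breaking $\widehat K - \widetilde K$ into one rank-$2$ summand per shot per pair exploits the fact that each Bernoulli outcome perturbs only one kernel entry (and its transpose), so the aggregate $\sum [A^{(k,i,j)}]^2$ grows linearly rather than quadratically in $n$. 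Once this is in hand, the remainder of the argument is a direct substitution into Lemma~\ref{hoeffdingmatrix} and the equivalent form (\ref{matrixho1}) converting the minimum-eigenvalue tail into the desired positive-semidefiniteness statement.
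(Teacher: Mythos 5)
Your proposal follows essentially the same route as the paper: decompose $\widehat{K}-\widetilde{K}$ into independent zero-mean rank-$\le 2$ summands, one per shot and per kernel entry, bound each square by a diagonal PSD matrix so that the variance proxy is $O(n/m)$, and apply the one-sided Matrix Hoeffding bound in the form of Eq.~(\ref{matrixho1}). The only difference is bookkeeping: the paper sums over all ordered pairs $(i,j)$ so that the total equals $2(\widehat{K}-\widetilde{K})$ and takes $t=\lambda$, which recovers the stated exponent $\lambda^{2}m/4n$ exactly, whereas your $i\le j$ version with $t=\lambda/2$ yields the slightly weaker exponent $\lambda^{2}m/8(n+1)$, consistent with your caveat about absorbing constants.
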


\begin{proof}
	According to our settings, for any $i,j \in \left[n\right] $, we have
	\begin{equation}
	\widehat{\mathcal{K}}\left(\boldsymbol{x}_{i},\boldsymbol{x}_{j} \right)=\frac{1}{m}\sum_{k=1}^{m}{V_{k}\left(\boldsymbol{x}_{i},\boldsymbol{x}_{j}\right)},
	\end{equation}
	where each $V_{k}\left(\boldsymbol{x}_{i},\boldsymbol{x}_{j} \right)\triangleq V_{k;ij}$ is a Bernoulli random variable with expectation $\widetilde{\mathcal{K}}\left(\boldsymbol{x}_{i},\boldsymbol{x}_{j} \right)=\widetilde{K}_{ij}$.
	
	For any $i,j \in \left[n\right],\, k \in \left[m\right] $, let
	\begin{equation}\label{key}
	Y^{\left( k;ij\right) }=\frac{1}{m}\left(V_{k;ij}-\widetilde{K}_{ij} \right) E^{\left( ij\right) },
	\end{equation}
	where $E^{\left( ij\right) }=|i\rangle\langle j|+|j\rangle\langle i|$, and particularly,  $E^{\left( ii\right) }=2|i\rangle\langle i|$. It is clear that the expectation of the random Hermitian $n\times n$  matrix $Y^{\left( k;ij\right) }$ is zero and
	\begin{align}
	{\left( Y^{\left( k;ij\right) }\right) }^2&=\frac{1}{m^2}{\left(V_{k;ij}-\widetilde{K}_{ij} \right)}^2 {\left( E^{\left( ij\right) }\right) }^2\nonumber
	\\
	&=\frac{1}{2{m}^2}{\left(V_{k;ij}-\widetilde{K}_{ij} \right)}^2 \left( E^{\left( ii\right) }+E^{\left( jj\right) }\right)\nonumber
	\\
	&\preceq \frac{1}{2{m}^2} \left( E^{\left( ii\right) }+E^{\left( jj\right) }\right)\label{matrixupper},
	\end{align}
	where Eq.~(\ref{matrixupper}) is derived from the inequality of ${\left(V_{k;ij}-\widetilde{K}_{ij} \right)}^2\leq 1$.
	
	According to  Lemma~\ref{hoeffdingmatrix}, for all $t\geq 0$,
	\begin{equation}
	\mathbb{P}\left[ \sum_{i,j=1}^{n}{\sum_{k=1}^{m}{Y^{\left( k;ij\right) }}} +tI \succeq 0 \right] \geq 1-ne^{{-{t}^{2}}/{2{\sigma}^{2}}},
	\end{equation}
	with
	\begin{align}
	{\sigma}^{2}&=\frac{1}{2}\Bigg{\|} \sum_{i,j=1}^{n}{\sum_{k=1}^{m}{\left[\frac{1}{2{m}^2} \left( E^{\left( ii\right) }+E^{\left( jj\right) }\right)+\mathbb{E} {\left(Y^{\left( k;ij\right)}\right) }^2   \right] }}\Bigg{\|}_{2}\nonumber
	\\
	&=\frac{1}{2}\Bigg{\|} \sum_{i,j=1}^{n}{\left[\frac{1}{2m} \left( E^{\left( ii\right) }+E^{\left( jj\right) }\right)+\frac{1}{2{m}^2}\sum_{k=1}^{m}{\mathbb{E} {\left(V_{k;ij}-\widetilde{K}_{ij} \right)}^2 \left( E^{\left( ii\right) }+E^{\left( jj\right) }\right) }  \right] }\Bigg{\|}_{2}\nonumber
	\\
	&\leq \frac{1}{2m}\bigg{\|} \sum_{i,j=1}^{n}{\left( E^{\left( ii\right) }+E^{\left( jj\right) }\right) }\bigg{\|}_{2}\label{sigma2}
	\\
	&=\frac{1}{2m}\Vert 4nI\Vert_{2}=\frac{2n}{m}\nonumber.
	\end{align}
	Here, Eq.~(\ref{sigma2}) is derived from the inequality of ${\left(V_{k;ij}-\widetilde{K}_{ij} \right)}^2\leq 1$.
	
	Thus, by letting $t=\lambda$ and noting that
	\begin{align}
	\sum_{i,j=1}^{n}{\sum_{k=1}^{m}{Y^{\left( k;ij\right) }}}=\sum_{i,j=1}^{n}{\left( \widehat{K}_{ij}-\widetilde{K}_{ij} \right) E^{\left( ij\right) }} = 2\left( \widehat{K}-\widetilde{K}\right),
	\end{align}
	we have
	\begin{equation}\label{key}
	\mathbb{P}\left[\widehat{K}+\frac{\lambda}{2}I \succeq \widetilde{K}\right]	\geq 1-ne^{{-{\lambda}^{2}m}/{4n}},
	\end{equation}
	which completes the proof.
\end{proof}

From Lemma~\ref{positive}, the positive definiteness of $\widehat{K}+\lambda I$ guarantees that  $\hat{h}\left( \boldsymbol{x}\right)$ exists and can be described in the form of Eq.~(\ref{optimalh2}). Now we present the proof of Theorem~\ref{estimatedth}.

\begin{proof}
	According to Lemma~\ref{reverse}, we have
	\begin{align}
	\Vert M_{\widehat{K}, \bar{K}}  \Vert_{2}=&~\Big{\|} {\left(\bar{K}+\lambda I \right) }^{-1} - {\left(\widehat{K}+\lambda I \right) }^{-1} \Big{\|}_{2}\nonumber
	\\
	\le&~\frac{\big{\|} {\left(\bar{K}+\lambda I \right) }^{-1} \big{\|}^{2}_{2}\Vert \bar{K}-\widehat{K}\Vert_{2}}{1-\Big{\|} {\left(\bar{K}+\lambda I \right) }^{-1} \left( \bar{K}-\widehat{K}\right) \Big{\|}_{2}} \nonumber
	\\
	\le&~\frac{\frac{1}{{\lambda}^2}\Vert \bar{K}-\widehat{K}\Vert_{2}}{1-\frac{1}{\lambda} \Vert \bar{K}-\widehat{K}\Vert_{2}} \label{tranormm},
	\end{align}
	where Eq.~(\ref{tranormm}) employs the sub-multiplicative property of matrix norm and $\Vert {\left(\bar{K}+\lambda I \right) }^{-1} \Vert_{2}=\frac{1}{\lambda}$.
	
	Moreover,
	\begin{align}
	\Vert \bar{K}-\widehat{K}\Vert_{2} \le& ~\Vert \bar{K}-\widetilde{K}\Vert_{2}+\Vert \widetilde{K}-\widehat{K}\Vert_{2}\nonumber
	\\
	\le&~ n \left( 1-p\right)\left( 1+\frac{1}{D}\right)+\Vert \widetilde{K}-\widehat{K}\Vert_{2},\label{tildehat}
	\end{align}
	where Eq.~(\ref{tildehat}) is derived from Eq.~(\ref{difference}).
	
	According to the definition of $\widehat{\mathcal{K}}\left(\boldsymbol{x},\boldsymbol{x}^{\prime} \right)$ in Eq.~(\ref{estimatedk}) and the Hoeffding's inequality (Lemma~\ref{hoeffding}), for any $\epsilon\geq0$ and arbitrary $\boldsymbol{x},\boldsymbol{x}^{\prime}\in\mathcal{X}$, we have
	\begin{equation}\label{measurement}
	\mathbb{P}\left( \left|\widehat{\mathcal{K}}\left(\boldsymbol{x},\boldsymbol{x}^{\prime} \right)-\widetilde{\mathcal{K}}\left(\boldsymbol{x},\boldsymbol{x}^{\prime} \right)\right|\geq\epsilon\right) \leq 2e^{-2\epsilon^{2}m}.
	\end{equation}
	
	Note that the estimated noisy kernel matrix $\widehat{K}$ is a random matrix with its expectation being the noisy kernel matrix $\widetilde{K}$. Then for any $\epsilon\ge0$,
	\begin{align}
	\mathbb{P}\left( \big{\|} \widetilde{K}-\widehat{K}\big{\|}_{2} \ge \epsilon \right)\le&~ \mathbb{P}\left( \big{\|} \widetilde{K}-\widehat{K}\big{\|} \ge \epsilon \right)\label{2F} \\
	= &~\mathbb{P}\left( \sum_{i=1}^{n}{\sum_{j=1}^{n}{\left| \widetilde{K}_{ij} - \widehat{K}_{ij} \right|^{2}\ge \epsilon^{2}}} \right)\nonumber
	\\
	\le &~ \mathbb{P}\left(\bigcup\limits_{i=1}^{n}{\bigcup\limits_{j=1}^{n}{\left\lbrace\left| \widetilde{K}_{ij} - \widehat{K}_{ij} \right|^{2}\ge \frac{\epsilon^{2}}{n^{2}} \right\rbrace }} \right) \nonumber
	\\
	\le&~ \sum_{i=1}^{n}{\sum_{j=1}^{n}{\mathbb{P}\left( \left| \widetilde{K}_{ij} - \widehat{K}_{ij} \right|^{2}\ge \frac{\epsilon^{2}}{n^{2}} \right) }} \nonumber
	\\
	\le &~\, 2n^{2}e^{-2m\epsilon^{2}/n^{2}},\label{measurement1}
	\end{align}
	where Eq.~(\ref{2F}) employs the relationship between the spectral norm and the Frobenius norm, namely, $\Vert A \Vert_{ 2}\leq\Vert A \Vert$, and Eq.~(\ref{measurement1}) is obtained from Eq.~(\ref{measurement}).
	
	Combining Eqs.~(\ref{tranormm}), (\ref{tildehat}) and (\ref{measurement1}), it yields that for any $\delta_{1}\textgreater0$, with probability of at least $1-\delta_{1}$,
	\begin{equation}\label{estimatedm}
	\Vert M_{\widehat{K}, \bar{K}}\Vert_{2}\le\frac{\frac{1}{\lambda^2} \Big{[} n \left( 1-p\right)\left( 1+\frac{1}{D}\right)+ \sqrt{\frac{n^{2}}{2m}\log{\frac{2n^{2}}{\delta_{1}}}} \Big{]}  }{1-\frac{1}{\lambda} \Big{[}n \left( 1-p\right)\left( 1+\frac{1}{D}\right)+ \sqrt{\frac{n^{2}}{2m}\log{\frac{2n^{2}}{\delta_{1}}}} \Big{]}  }.
	\end{equation}
	
	Thus, from Lemma~\ref{positive}, Lemma~\ref{noisylemma}, and Eq.~(\ref{estimatedm}), by employing the sub-additivity of probability, we have the conclusion that for any $\delta_{1},\delta_{2}\textgreater0$, with probability of at least $1-\delta_{1}-\delta_{2}-ne^{{-{\lambda}^{2}m}/{4n}}$,
	\begin{equation}
	\mathop{\mathbb{E}}\limits_{(\boldsymbol{x},y)\sim\mathcal{D}}{\left|\hat{h}\left( \boldsymbol{x}\right) -\bar{h}\left( \boldsymbol{x}\right)\right|} \le~f\left( \frac{n}{\lambda} \left( 1-p\right)\left( 1+\frac{1}{D}\right)+\frac{n}{\lambda} \sqrt{\frac{\log{\frac{2n^{2}}{\delta_{1}}}}{2m}}\right)
	+\frac{8\sqrt{Dn}}{D\lambda+n} +6\sqrt{\frac{\log{\frac{4}{\delta_{2}}} }{2n}}\nonumber,
	\end{equation}
	where $f\left( z\right) =\frac{z+8\sqrt{\frac{z}{\lambda}}}{1-z}$.
	
	Finally, by letting $\delta_1=\delta_2=\frac{\delta}{2}$, we reach the conclusion of Theorem~\ref{estimatedth}.
\end{proof}

\end{document}